\crefname{lemma}{Lemma}{Lemmas}
\title{Improved Sublinear-Time Edit Distance for Preprocessed Strings}
\author{Karl Bringmann}{Saarland University and Max Planck Institute for Informatics, Saarland Informatics Campus,\\Saarbrücken, Germany}{}{}{}
\author{Alejandro Cassis}{Saarland University and Max Planck Institute for Informatics, Saarland Informatics Campus,\\Saarbrücken, Germany}{}{}{}
\author{Nick Fischer}{Saarland University and Max Planck Institute for Informatics, Saarland Informatics Campus,\\Saarbrücken, Germany}{}{}{}
\author{Vasileios Nakos}{RelationalAI, Berkeley, USA}{}{}{}
\authorrunning{K. Bringmann, A. Cassis, N. Fischer, V. Nakos}
\keywords{Edit Distance, Property Testing, Preprocessing, Precision Sampling}
\begin{document}
\maketitle

\begin{abstract}
We study the problem of approximating the edit distance of two strings in sublinear time, in a setting where one or both string(s) are preprocessed, as initiated by Goldenberg, Rubinstein, Saha~(STOC~'20). Specifically, in the \emph{$(k, K)$-gap edit distance} problem, the goal is to distinguish whether the edit distance of two strings is at most $k$ or at least~$K$. We obtain the following results:
\begin{itemize}
\item After preprocessing \emph{one} string in time $n^{1+\order(1)}$, we can solve $(k, k \cdot n^{\order(1)})$-gap edit distance in time~$(n/k + k) \cdot n^{\order(1)}$.
\item After preprocessing \emph{both} strings separately in time $n^{1+\order(1)}$, we can solve $(k, k \cdot n^{\order(1)})$-gap edit distance in time~$kn^{\order(1)}$.
\end{itemize}
Both results improve upon some previously best known result, with respect to either the gap or the query time or the preprocessing time.

Our algorithms build on the framework by Andoni, Krauthgamer and Onak (FOCS '10) and the recent sublinear-time algorithm by Bringmann, Cassis, Fischer and Nakos (STOC '22). We replace many complicated parts in their algorithm by faster and simpler solutions which exploit the preprocessing.
\end{abstract}

\section{Introduction}
The \emph{edit distance} (also known as \emph{Levenshtein distance}) is a fundamental measure of similarity between strings. It has numerous applications in several fields such as information retrieval, computational biology and text processing. Given strings $X$ and $Y$, their edit distance denoted by $\ED(X, Y)$ is defined as the minimum number of character insertions, deletions and substitutions needed to transform $X$ into $Y$.

A textbook dynamic programming algorithm computes the edit distance of two strings of length $n$ in time $O(n^2)$. Popular conjectures such as the \emph{Strong Exponential Time Hypothesis} imply that that this algorithm is essentially optimal, as there is no strongly subquadratic-time algorithm~\cite{BackursI15,AbboudBW15,BringmannK15,AbboudHWW16}. As for some applications involving enormous strings (such as DNA sequences) quadratic-time algorithms are impractical, a long line of research developed progressively better and faster \emph{approximation} algorithms~\cite{BarYossefJKK04,BatuES06,OstrovskyR07,AndoniO12,AndoniKO10,ChakrabortyDGKS20,KouckyS20,BrakensiekR20}. The current best approximation guarantee in near-linear time is an algorithm by Andoni and Nosatzki~\cite{AndoniN20} computing an $f(1/\varepsilon)$-approximation in time~$O(n^{1+\varepsilon})$.

Another more recent line of research studies edit distance in the \emph{sublinear-time} setting. Here the goal is to approximate the edit distance without reading the entire input strings. More formally, in the~$(k, K)$-gap edit distance problem the goal is to distinguish whether the edit distance between $X$ and $Y$ is at most $k$ or greater than $K$. The performance of gap algorithms is typically measured in terms of the string length $n$ and the gap parameters~$k$ and~$K$. This problem has been studied in several works~\cite{BatuEKMR03,AndoniO12,GoldenbergKS19,BrakensiekCR20,KociumakaS20} most of which focus on the~$(k, k^2)$-gap problem. Currently, there are two incomparable best known results: A recent result by Goldenberg, Kociumaka, Krauthgamer and Saha~\cite{GoldenbergKKS21} established a non-adaptive algorithm for the~$(k, k^2)$-gap problem in time $\widetilde{O}(n/k^{3/2})$.\footnote{We write $\widetilde\Order(\cdot)$ to hide polylogarithmic factors $(\log n)^{\Order(1)}$.} Another recent result by Bringmann, Cassis, Fischer and Nakos~\cite{BringmannCFN22} reduces the gap to $\Order^*(1)$ and solves the~$(k, \Order^*(k))$\=/gap problem in time $O^*(n/k + k^4)$.\footnote{We write $\Order^*(\cdot)$ to hide subpolynomial factors $n^{\order(1)}$ in $n$.} See \cref{tab:comparison} for a more detailed comparison.

\begin{table}[t]
    \caption{A comparison of sublinear-time algorithms for the $(k, k \cdot g)$-gap edit distance problem for different gap parameters $g$. All algorithms in this table are randomized and succeed with high probability. Note that some of these results are subsumed by others.} \label{tab:comparison}
    \begin{tabular*}{\linewidth}[t]{@{\extracolsep{\fill}}p{.4\linewidth}p{.07\linewidth}p{.2\linewidth}p{.14\linewidth}}
        \toprule
        \thead[l]{Source} & \thead[l]{Gap $g$} & \thead[l]{Preprocessing time} & \thead[l]{Query time} \\
        \midrule
        Goldenberg, Krauthgamer, Saha~\cite{GoldenbergKS19} & $\Order(k)$ & no preprocessing & $\widetilde\Order(n/k + k^3)$ \\
        Kociumaka, Saha~\cite{KociumakaS20} & $\Order(k)$ & no preprocessing & $\widetilde\Order(n/k + k^2)$ \\
        Brakensiek, Charikar, Rubinstein~\cite{BrakensiekCR20} & $\Order(k)$ & no preprocessing & $\widetilde\Order(n/\sqrt k)$ \\
        Bringmann, Cassis, Fischer, Nakos~\cite{BringmannCFN22} & $\Order(k)$ & no preprocessing & $\Order^*(n/k^2 + k^8)$ \\
        Goldenberg, Kociumaka, Krauthgamer, Saha~\cite{GoldenbergKKS21} & $\Order(k)$ & no preprocessing & $\widetilde\Order(n/k^{3/2})$ \\
        Bringmann, Cassis, Fischer, Nakos~\cite{BringmannCFN22} & $\Order^*(1)$ & no preprocessing & $\Order^*(n/k + k^4)$ \\[2ex]
        Goldenberg, Rubinstein, Saha~\cite{GoldenbergRS20} & $\Order(k)$ & one-sided, $\widetilde\Order(n)$ & $\widetilde\Order(n/k + k^2)$ \\
        Brakensiek, Charikar, Rubinstein~\cite{BrakensiekCR20} & $g$ & one-sided, $\widetilde\Order(nk / g)$ & $\widetilde\Order(n/g + k^2/g)$ \\
        \emph{This work, \cref{thm:one-sided}} & $\Order^*(1)$ & one-sided, $\Order^*(n)$ & $\Order^*(n/k + k)$ \\[2ex]
        Chakraborty, Goldenberg, Kouck\'{y}~\cite{ChakrabortyGK16} & $\Order(k)$ & two-sided, $\widetilde\Order(n)$ & $\Order(\log n)$ \\
        Brakensiek, Charikar, Rubinstein~\cite{BrakensiekCR20} & $g$ & two-sided, $\widetilde\Order(nk / g)$ & $\widetilde\Order(k^2 / g)$ \\
        Ostrovsky, Rabani~\cite{OstrovskyR07} & $\Order^*(1)$ & two-sided, $\widetilde\Order(n^2)$ & $\Order(\log n)$ \\
        \emph{This work, \cref{thm:two-sided}} & $\Order^*(1)$ & two-sided, $\Order^*(n)$ & $\Order^*(k)$ \\
        Goldenberg, Rubinstein, Saha~\cite{GoldenbergRS20} & $\Order(1)$ & two-sided, $\widetilde\Order(n^2)$ & $\Order^*(n^{3/2})$ \\
        Goldenberg, Rubinstein, Saha~\cite{GoldenbergRS20} & $1$ & two-sided, $\widetilde\Order(n)$ & $\widetilde\Order(k^2)$ \\
        \bottomrule
    \end{tabular*}
\end{table}

Our starting point is the work by Goldenberg, Rubinstein and Saha~\cite{GoldenbergRS20} which studies sublinear algorithms for edit distance in the \emph{preprocessing model}. Here, we are allowed to preprocess one or both input strings $X$ and $Y$ \emph{separately}, and then use the precomputed information to solve the $(k, K)$-gap edit distance problem. This model is motivated by applications where many long strings are compared against each other. For example, the \emph{string similarity join} problem is to find all pairs of strings in a database (containing e.g.\ DNA sequences) which are close in edit distance; see~\cite{WandeltDGMMPSTWWL14} for a survey on practically relevant algorithms. Note that in these applications, if we have an algorithm with almost-linear preprocessing time (which is the case for all the algorithms we present in this paper), then the overhead incurred by preprocessing is comparable to the time necessary to read and store the strings in the first place. In~\cite{GoldenbergRS20}, the authors pose and investigate the following open question:
\begin{center}
    \emph{``What is the complexity of approximate edit distance\\with preprocessing when $k \ll n$?''~\cite{GoldenbergRS20}}
\end{center}
This question has spawned significant interest in the community~\cite{ChakrabortyGK16,GoldenbergRS20,BrakensiekCR20}, and with this paper we also make progress towards this question. We give an overview of results in \cref{tab:comparison}. Note that most results are hard to compare to each other (one-sided versus two-sided preprocessing, exact versus $\Order(1)$-approximate versus $\Order(k)$-approximate).

In the two-sided model, all known algorithms (with almost-linear preprocessing time\footnote{Here we insist on almost-linear preprocessing time since the celebrated embedding of edit distance into the $\ell_1$-metric with distortion $n^{o(1)}$ due to Ostrovsky and Rabani~\cite{OstrovskyR07} achieves query time $O(\log n)$ but requires preprocessing time $\Omega(n^2)$.} and, say, subpolynomial gap $g = n^{\order(1)}$) share the common barrier that the query time is~$\Omega(k^2)$. Due to this barrier, Goldenberg et al.~\cite{GoldenbergRS20} specifically ask whether there exists an approximation algorithm with sub-$k^2$ query time. One of our contributions is that we answer this question in the affirmative.

\subparagraph*{Our Results}
We develop sublinear-time algorithms for the~$(k, \Order^*(k))$-gap edit distance problem, in the one-sided and two-sided preprocessing model, respectively.

\begin{theorem}[One-Sided Preprocessing] \label{thm:one-sided}
Let $X, Y$ be length-$n$ strings. After preprocessing~$Y$ in time $\Order^*(n)$, we can solve the $(k, k \cdot n^{o(1)})$-gap edit distance problem for $X$ and $Y$ in time~$\Order^*(n/k + k)$ with high probability.
\end{theorem}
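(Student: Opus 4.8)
The plan is to instantiate the recursive framework of Andoni, Krauthgamer and Onak~\cite{AndoniKO10}, in the form used by Bringmann, Cassis, Fischer and Nakos~\cite{BringmannCFN22}, and to replace the expensive base-case subroutines of the latter by queries to a data structure built when preprocessing~$Y$. Recall the skeleton: one builds a balanced $b$-ary tree over the positions of $X$ with branching $b = n^{\order(1)}$ and bottom intervals of length $\beta = n^{\order(1)}$, with the parameters tuned so that the depth is superconstant yet the resulting approximation factor stays $n^{\order(1)}$. Each node carries an interval $I \subseteq [n]$, and one defines a \emph{tree distance} whose value at an internal node is obtained from the children by minimizing, over a bounded window of candidate shifts, the sum of the children's values plus a shift-change penalty, and whose value at a bottom node $I$ is $\min_{|s|\le\beta}\ED(X[I], Y[I+s])$ (up to the gap). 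The two facts we import without change are: (i) this tree distance is an $n^{\order(1)}$-approximation of $\ED(X,Y)$~\cite{AndoniKO10}; and (ii) AKO \emph{precision sampling} estimates the root value within a constant factor with high probability while descending into only $\Order^*(k)$ bottom nodes in total when the distance is $\approx k$, plus $\Order^*(n/k)$ work to read and hash the top-level blocks, exactly as in~\cite{BringmannCFN22}. In that algorithm the only component costing more than $\Order^*(n/k+k)$ is the bottom level, where one evaluates the $(\theta, \theta\cdot n^{\order(1)})$-gap version of $\min_{|s|\le\beta}\ED(X[I],Y[I+s])$ for a short string $X[I]$; doing this with the tools of~\cite{BringmannCFN22} costs $k^{\Order(1)}$ per node, which is the origin of their $+k^4$ term.

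\emph{The preprocessed data structure.} From $Y$ alone, in time $\Order^*(n)$, we build a structure supporting the following query: given query access to a string $P$ with $|P| = m \le \beta$, an offset $p \in [n]$, and a threshold $\theta$, return in time $\Order^*(m)$ a value sandwiched between $\min_{|s|\le\beta}\ED(P, Y[p+s\,..\,p+s+m])$ and $n^{\order(1)}$ times this quantity plus $\theta$. The natural implementation runs the same AKO tree-distance recursion with $Y$ now as the \emph{fixed} string: for every canonical interval $J$ of $Y$ whose length is a power of two, together with the associated shift ranges, we precompute all the $Y$-side information required to evaluate the tree-distance combination at the corresponding node against an arbitrary aligned query substring --- in effect, the $Y$-side of every shift table and every $(\min,+)$-combination in the recursion. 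At query time only the short string $P$ is processed against this skeleton; the recursion on $P$ touches $\Order^*(m)$ nodes and so costs $\Order^*(m)$, instead of the $\Order^*(m\cdot k)$ or $\Order^*(m + k^2)$ one would pay without preprocessing. The total length of all canonical intervals of $Y$ across all levels is $\Order^*(n)$, which bounds both the preprocessing time and the size of the side information. Turning this sketch into a bound of $\Order^*(m)$ on the query time together with an $n^{\order(1)}$-approximation guarantee is where the real work lies, and is the main obstacle.

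\emph{Putting it together.} For the running time: reading and hashing the top-level blocks costs $\Order^*(n/k)$; precision sampling then visits $\Order^*(k)$ bottom nodes (and, together with the internal nodes on the paths to them and the top level, makes $\Order^*(n/k+k)$ recursive calls); and each bottom node is resolved by one data-structure query of cost $\Order^*(\beta) = n^{\order(1)}$, so the base-case work is $\Order^*(k)$ and the total query time is $\Order^*(n/k+k)$, with preprocessing time $\Order^*(n)$. For correctness: the tree distance $n^{\order(1)}$-approximates $\ED(X,Y)$ by~(i); precision sampling preserves it up to a constant factor with high probability, after boosting the per-call failure probability and union-bounding over the $\Order^*(n/k+k)$ calls made; and each base-case query returns an $n^{\order(1)}$-approximation of the relevant shifted edit distance. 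Multiplying these three $n^{\order(1)}$ factors leaves the overall gap at $k\cdot n^{\order(1)}$, so the algorithm distinguishes $\ED(X,Y)\le k$ from $\ED(X,Y)\ge k\cdot n^{\order(1)}$ with high probability, proving \cref{thm:one-sided}. Besides the data structure, the one nonstandard point to check is that the additive slack $\theta$ supplied to the base case interacts correctly with the precision-sampling budget --- that the errors introduced at the leaves remain a negligible fraction of the value estimated at every ancestor --- which is arranged as in~\cite{BringmannCFN22} by letting $\theta$ decrease geometrically with the level.
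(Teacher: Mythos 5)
You have the right high-level framework (AKO precision tree, exploration limited to $\Order^*(k)$ nodes, bottom level resolved by a $Y$-preprocessed data structure), but the argument has a genuine gap at its core: the preprocessed data structure that is supposed to answer the base-case query $\min_s \ED(P, Y[p+s\,..\,p+s+m])$ in time $\Order^*(m)$ with $n^{o(1)}$-approximation is never constructed, and you explicitly flag this as ``the main obstacle.'' The sketch you offer --- precomputing ``the $Y$-side of every shift table and every $(\min,+)$-combination'' --- does not resolve it, because the AKO recursion at every internal node combines $X$-versus-$Y$ distances of subintervals, so no $Y$-only information can drive the recursion once an unknown query string $P$ enters; moreover the Andoni--Onak embedding that underlies the paper's preprocessing is \emph{non-oblivious}, so one cannot embed a fresh $P$ consistently with a separately embedded $Y$. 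The paper avoids exactly this difficulty by never asking the preprocessing to answer $X$-versus-$Y$ queries directly: it first runs a fingerprint-based $\Matching$ query to find a shift $s^*$ with $\HD(X_v, Y_{v,s^*}) \leq t_v/2$, and then, for every other shift $s$, replaces $\ED(X_v, Y_{v,s})$ by the $Y$-versus-$Y$ quantity $\ED(Y_{v,s^*}, Y_{v,s})$ via the triangle inequality, which \emph{is} precomputable from $Y$ alone (\cref{lem:matching-queries,lem:shifted-distance-queries,lem:pruning-rule-correctness}). This triangle-inequality conversion from $X$-versus-$Y$ to $Y$-versus-$Y$ is the key idea, and it is absent from your proposal.

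Two further issues. First, your base-case objective $\min_{|s|\le\beta}\ED(X[I],Y[I+s])$ with $\beta = n^{o(1)}$ uses the wrong shift window: an optimal alignment can shift by up to $\Theta(k)$, so the window must be $\Theta(k)$, not $\beta$; but then even writing one value per shift costs $\Theta(k)$ per bottom node and $\Theta(k^2)$ overall, breaking the $\Order^*(k)$ budget. The paper handles this by discretizing the shifts at node $v$ to multiples of $\lfloor t_v/2\rfloor$, so that $|S_v| = \Order^*(k/t_v)$ and, by \cref{lem:expected-precision}, the expected number of shifts per node is $n^{o(1)}$ --- a step your proposal does not make. Second, the claim that precision sampling alone already limits~\cite{BringmannCFN22} to $\Order^*(k)$ bottom nodes is not accurate: their bound on explored nodes relies on periodicity-based structural pruning, whereas the present paper replaces that machinery by the $\Matching$-query pruning rule and proves the crucial count that at most $\Order(k\log n)$ nodes across all levels fail it (\cref{lem:pruning-rule-efficiency}). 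Your proof needs its own version of that counting argument, and does not supply one.
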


In comparison to the $(k, k^2)$-gap algorithms from~\cite{GoldenbergRS20,BrakensiekCR20} with best query time\footnote{For the $(k, k^2)$-gap problem, the running time bounds $\widetilde\Order(n/k + k)$ and $\widetilde\Order(n/k)$ can be considered equal, as for~$k \geq \sqrt n$ the algorithm may return a trivial answer.}~$\widetilde\Order(n/k + k)$, we contribute the following improvement: Ignoring lower-order factors, we reduce the gap from~$k$ to~$\Order^*(1)$ while achieving the same query time $\Order^*(n/k + k)$ and the same preprocessing time $\Order^*(n)$. In comparison to the $(k, \Order^*(k))$-gap algorithm in time~\makebox{$\Order^*(n/k + k^4)$} from~\cite{BringmannCFN22}, we achieve the same gap but an improved query time for large~$k$, at the cost of preprocessing one of the strings.

\medskip
In the two-sided model, we obtain an analogous result, where the query time no longer depends on $n/k$.

\begin{theorem}[Two-Sided Preprocessing] \label{thm:two-sided}
Let $X, Y$ be length-$n$ strings. After preprocessing both $X$ and $Y$ (separately) in time $\Order^*(n)$, we can solve the $(k, k \cdot n^{o(1)})$-gap edit distance problem for $X$ and $Y$ in time~$\Order^*(k)$ with high probability. 
\end{theorem}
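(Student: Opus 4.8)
The plan is to reduce the two-sided case to the one-sided case of \cref{thm:one-sided}, by observing that the $\Order^*(n/k)$ term in the one-sided query time arises only from the part of the algorithm that still has to \emph{read} one of the two strings in order to locate the $\Order(k)$ regions where the two strings potentially differ. If both strings have been preprocessed, we should be able to precompute, for each string separately, a sketch that lets us identify these ``interesting'' positions without scanning all of $X$ at query time. Concretely, I would first recall the high-level structure inherited from Andoni--Krauthgamer--Onak and from~\cite{BringmannCFN22}: the algorithm builds a hierarchical (tree) decomposition of the strings into blocks, and at each level it needs, for a block of $X$ and a nearby window of $Y$, a cheap estimate of their edit distance together with the ``shift'' that roughly aligns them. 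In the one-sided model the paper presumably precomputes such block-level information for $Y$ only; my first step is to precompute the analogous information for $X$ as well, so that every block-vs-block comparison becomes a sketch-vs-sketch comparison costing only $n^{\order(1)}$ time and touching no raw characters.

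The key steps, in order, are: (i) fix the same hierarchical block decomposition for both $X$ and $Y$ and, during preprocessing, for every block at every level store a small sketch — essentially the precision-sampling / metric-embedding sketch used in \cref{thm:one-sided} — so that two sketches can be combined to produce a $\Order^*(1)$-approximate edit-distance value in subpolynomial time; (ii) at query time, run the same recursive estimation procedure as in the one-sided algorithm, but whenever the one-sided algorithm would spend time proportional to the length of a block of $X$ to extract alignment information, instead read that information off the precomputed sketch of $X$ in $n^{\order(1)}$ time; (iii) bound the total query time by noting that the recursion only ever explores $\Order^*(k)$ blocks in total (the hierarchical decomposition has $\Order(\log n)$ levels, and on each level only $\Order^*(k)$ blocks can be ``charged'' by an alignment of cost $\le k$, exactly as in the analysis of~\cite{BringmannCFN22}), so the query time is $\Order^*(k) \cdot n^{\order(1)} = \Order^*(k)$; (iv) verify that preprocessing each string remains $\Order^*(n)$, since there are $\Order(n)$ blocks across all levels and each sketch is computable in amortized $n^{\order(1)}$ time, and that the approximation guarantee composes across the $\Order(\log n)$ levels to yield gap $k \cdot n^{\order(1)}$; and finally (v) boost the success probability to high probability by the usual independent repetition, paying only a polylogarithmic factor.

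The main obstacle I anticipate is \emph{making the block comparisons entirely preprocessing-side}: in the one-sided algorithm the sketch of $Y$ is combined with \emph{explicitly read} characters of $X$ at query time, and it is not a priori clear that a sketch of $X$ computed \emph{in isolation} (without knowing $Y$, and in particular without knowing the relevant shift) carries enough information to be matched against the corresponding sketch of $Y$. Resolving this requires the sketch to be ``shift-robust'': it must support, for a whole range of candidate shifts, a good estimate of the edit distance between a block of $X$ and the correspondingly displaced window of $Y$. I expect this to follow from the same precision-sampling machinery used for \cref{thm:one-sided} — one stores, per block, sketches for all $\Order(\log n)$ relevant shift scales, multiplying the sketch size (and hence preprocessing) by only a polylogarithmic factor — but checking that the error does not blow up when these per-shift sketches are combined up the recursion tree, and that the ``interesting positions'' can still be located from sketches alone, is the delicate part of the argument. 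Everything else (the recursion, the $\Order^*(k)$ block bound, the probability amplification) is a routine adaptation of the one-sided proof.
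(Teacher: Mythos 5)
You correctly isolate the source of the $n/k$ term in the one-sided bound and correctly name the central obstacle --- that a sketch of $X$ computed in isolation must still be comparable against a sketch of $Y$ computed in isolation. However, the route you propose for overcoming that obstacle does not go through, and the paper's actual resolution is fundamentally different. You propose to store, per block, a ``shift-robust'' precision-sampling/metric-embedding sketch so that a sketch of a block of $X$ can be compared directly against a sketch of the displaced window of $Y$ to estimate their edit distance. But the embedding the paper relies on (Andoni--Onak, \cref{thm:embedding-ao}) is explicitly \emph{non-oblivious}: it only gives $\ell_1$-distance guarantees between substrings of the \emph{same} string to which it was applied, so embedding $X$ and $Y$ separately yields no meaningful cross-string distances. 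A separately-computable, oblivious sketch of edit distance with $n^{o(1)}$ distortion and near-linear preprocessing time would be a major result on its own; the only such embedding mentioned in the paper (Ostrovsky--Rabani) needs $\Omega(n^2)$ preprocessing. Your proposal leaves exactly this ingredient unconstructed, and you acknowledge it as ``the delicate part'' without actually filling it in.

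The paper sidesteps the issue entirely: it never estimates $\ED(X_v, Y_{v,s})$ by comparing an edit-distance sketch of $X$ to one of $Y$. The only cross-string comparison is an \emph{equality test}. During preprocessing both $X$ and $Y$ are hashed (with subsampling at rate $\approx 1/t_v$) against shared public randomness, so a \Matching{} query (\cref{lem:matching-queries}) reduces to a constant-time lookup of whether the precomputed fingerprint of $X_v$ equals the precomputed fingerprint of some $Y_{v,s^*}$; this detects $\HD(X_v, Y_{v,s^*}) \le t_v/2$. Once such a near-exact match $X_v \approx Y_{v,s^*}$ is found, the quantity $\ED(X_v, Y_{v,s})$ is replaced by $\ED(Y_{v,s^*}, Y_{v,s})$ via the triangle inequality (\cref{lem:pruning-rule-correctness}) --- a \emph{purely $Y$-internal} quantity, to which the non-oblivious Andoni--Onak embedding applies without trouble. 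So the two-sided speedup comes from two simple, separately-preprocessable pieces --- symmetric exact-match fingerprints and a within-$Y$ $\ell_1$-embedding --- glued by the triangle inequality, not from any cross-string edit-distance sketch; the latter is precisely what your step~(i) would need and cannot currently be built.
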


We remark that all hidden factors in both theorems are $2^{\widetilde \Order(\sqrt{\log n})}$. For a detailed comparison of this algorithm to the previously known results, see \cref{tab:comparison}. We point out that \cref{thm:two-sided} settles the open question from~\cite{GoldenbergRS20} whether there exists an edit distance approximation algorithm with small gap and sub-$k^2$ query time.

\subparagraph*{Our Techniques}
To achieve our results we build on the recent sublinear-time algorithm by Bringmann, Cassis, Fischer and Nakos~\cite{BringmannCFN22}, which itself builds on an almost-linear-time algorithm by Andoni, Krauthgamer and Onak~\cite{AndoniKO10}. The basic idea of the original algorithm is to split the strings into several smaller parts and recur on these parts with non-uniform precisions. The idea of~\cite{BringmannCFN22} is to prune branches in the recursion tree, by detecting and analyzing periodic substructures. Towards that, they designed appropriate property testers to efficiently detect these structures. In our setting, we observe that having preprocessed the string(s), we can prune the computation tree much more easily. Thus, our algorithm proceeds in the same recursive fashion as~\cite{BringmannCFN22} and uses a similar set of techniques, but due to the preprocessing it turns out to be simpler and faster.

\subparagraph*{Further Related Work}
In the previous comparison about sublinear-time algorithms, we left out streaming and sketching algorithms~\cite{BelazzouguiZ16,ChakrabortyGK16,BarYossefJKK04} and document exchange protocols~\cite{Jowhari12,BelazzouguiZ16,Haeupler19}.

\subparagraph*{Future Directions}
There are several interesting directions for future work. We specifically mention two open problems.

\begin{enumerate}
    \item \emph{Constant gap?} As~\cref{tab:comparison} shows, so far no constant-gap sublinear-time algorithm is known. Maybe the one-sided preprocessing setting is more approachable for this challenge. We believe that our approach is hopeless to achieve a constant gap, since we borrow from the recursive decomposition introduced in~\cite{AndoniKO10} which inherently incurs a polylogarithmic overhead in the approximation factor.
    \item \emph{Improving the query time?} The well-known $\Omega(n/K)$ lower bound against the $(k, K)$-gap \emph{Hamming distance} problem (and therefore against edit distance) continues to hold in the one-sided preprocessing setting. In particular, the most optimistic hope is an algorithm with query time $\Order^*(n/k)$ for the $(k, \Order^*(k))$-gap edit distance problem. Can this be achieved or is the extra $+k$ in the query time of~\cref{thm:one-sided} necessary? For two-sided preprocessing, to the best of our knowledge no lower bound is known.
\end{enumerate}
\section{Preliminaries} \label{sec:preliminaries}
We set $\range ij = \set{i, i+1, \dots, j-1}$ (in particular, $\range ii = \emptyset$) and $\rangezero j = \range 0j$. We say that an event happens \emph{with high probability} if it happens with probability at least~\makebox{$1 - 1/\poly(n)$}, where the degree of the polynomial can be an arbitrary constant. We write $\poly(n) = n^{\Order(1)}$ and $\widetilde\Order(n) = n (\log n)^{\Order(1)}$.

Let $X, Y$ be strings over an alphabet $\Sigma$ with polynomial size. We denote by $|X|$ the length of $X$. We denote by $X \circ Y$ the \emph{concatenation} of $X$ and $Y$. We denote by~$X \access i$ the $i$-th character in $X$ starting with index zero. We denote by $X \range ij$ the substring of $X$ with indices in $\range ij$, that is, including~$i$ and excluding~$j$. For out-of-bounds indices we set~\makebox{$X \range ij = X \range{\max(i, 0)}{\min(j, |X|)}$}. If~$X$ and~$Y$ have the same length, we define their \emph{Hamming distance $\HD(X, Y)$} as the number of non-matching characters~\makebox{$\HD(X, Y) = |\{\, i : X \access i \neq Y \access i \,\}|$}. For two strings~$X, Y$ with possibly different lengths, we define their \emph{edit distance $\ED(X, Y)$} as the smallest number of character \emph{insertions}, \emph{deletions} and \emph{substitutions} necessary to transform $X$ into~$Y$. An \emph{optimal alignment} between~$X$ and~$Y$ is a monotonically non-decreasing function~\makebox{$A : \set{0, \dots, |X|} \to \set{0, \dots, |Y|}$} such that~$A(0) = 0$, $A(|X|) = |Y|$ and
\begin{equation*}
    \ED(X, Y) = \sum_{i=0}^{|X|-1} \ED(X \access i, Y \range{A(i)}{A(i+1)}).    
\end{equation*}
It is easy to see that there is an optimal alignment between any two strings $X, Y$: Trace an optimal path through the standard dynamic program for edit distance and assign $A(i)$ to the smallest $j$ for which the path crosses~$(i, j)$.

Let $T$ be a rooted tree, and let $v$ be a node in~$T$. We denote by $\Root(T)$ the root node in~$T$. We denote by $\parent(v)$ the parent node of $v$. We denote by $\depth(v)$ the length of the root-to-$v$ path, and by $\height(v)$ the length of the longest $v$-to-leaf path.
\section{Overview} \label{sec:overview}

\subsection{A Linear-Time Algorithm à la Andoni-Krauthgamer-Onak}
We start to outline an almost-linear-time algorithm to approximate the edit distance of two strings~$X, Y$ following the framework of the Andoni-Krauthgamer-Onak algorithm~\cite{AndoniKO10}, with some changes as in~\cite{BringmannCFN22} and some additional modifications (see the novel trick outlined at the end of this subsection).

\subparagraph*{First Ingredient: A Divide-and-Conquer Scheme}
The basic idea of the algorithm is to apply a divide-and-conquer scheme to reduce the approximation of the global edit distance to approximating the edit distance of several smaller strings. The straightforward idea of partitioning both strings $X, Y$ into parts~$X_1, \dots, X_m, Y_1, \dots, Y_m$ and computing the edit distances $\ED(X_i, Y_i)$ does not immediately work; instead we need to consider \emph{several shifts} of the string $Y_i$. We remark that this concept of recurring on smaller strings for several shifts is quite standard in previous work. The following lemma uses the same ideas as the ``$\mathcal E$-distance'' defined in~\cite{AndoniKO10}. We give a proof in \cref{sec:divide-and-conquer}.

\begin{restatable}[Divide and Conquer]{lemma}{lemdivideandconquer} \label{lem:divide-and-conquer}
Let $X, Y$ be length-$n$ strings, and let $0 = j_0 < \dots < j_B = n$. We write $X_i = X \range{j_{i-1}}{j_i}$ and $Y_{i, s} = Y \range{j_{i-1}+s}{j_i+s}$.
\begin{itemize}
\item For all shifts $s_1, \dots, s_B$ we have that $\ED(X, Y) \leq \sum_i \ED(X_i, Y_{i, s_i}) + 2|s_i|$.
\item There are shifts $s_1, \dots, s_B$ with $\sum_i \ED(X_i, Y_{i, s}) \leq 2\ED(X, Y)$ and $2|s_i| \leq \ED(X, Y)$ for all~$i$.
\end{itemize}
\end{restatable}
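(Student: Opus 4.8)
The plan is to prove the two inequalities separately, in each case by constructing an alignment between $X$ and $Y$ (respectively between the pieces) and bounding its cost.

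For the first (upper bound) part: fix arbitrary shifts $s_1, \dots, s_B$. For each $i$, let $A_i$ be an optimal alignment between $X_i$ and $Y_{i,s_i} = Y\range{j_{i-1}+s_i}{j_i+s_i}$, so that $\ED(X_i, Y_{i,s_i}) = \sum \ED(X_i\access\ell, \dots)$. I would build an alignment of $X$ with $Y$ by concatenating, for each block $i$ in order: (a) a cheap "repositioning" alignment that maps $Y\range{j_{i-1}+s_{i-1}}{j_{i-1}+s_i}$ appropriately — intuitively, after handling block $i-1$ the alignment has consumed $Y$ up to position $j_{i-1}+s_{i-1}$, but block $i$ wants to start reading $Y$ at position $j_{i-1}+s_i$, so we must insert/delete $|s_i - s_{i-1}|$ characters of $Y$ (with the convention $s_0 = s_B = 0$, which is consistent since $j_0 = 0$, $j_B = n$); and (b) the alignment $A_i$ restricted to block $i$, costing $\ED(X_i, Y_{i,s_i})$. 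Summing, the total cost is $\sum_i \ED(X_i, Y_{i,s_i}) + \sum_i |s_i - s_{i-1}| \le \sum_i \ED(X_i, Y_{i,s_i}) + \sum_i (|s_i| + |s_{i-1}|) = \sum_i \ED(X_i, Y_{i,s_i}) + 2\sum_i |s_i|$, using the telescoping/pairing of the shift terms and $s_0 = s_B = 0$. Since $\ED(X,Y)$ is at most the cost of any alignment, this gives the claimed bound. The only care needed is checking that the pieces genuinely concatenate into a valid alignment of the whole strings, i.e. that the $Y$-ranges used tile $Y$ without gaps or overlaps once the repositioning steps are inserted; this is a bookkeeping verification.

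For the second (lower bound) part: take an optimal alignment $A : \{0,\dots,n\} \to \{0,\dots,n\}$ of $X$ with $Y$, so $\ED(X,Y) = \sum_{i=0}^{n-1}\ED(X\access i, Y\range{A(i)}{A(i+1)})$. For each block boundary $j_{i}$ define the shift $s_i := A(j_{i-1}) - j_{i-1}$ — the displacement of the alignment at the left endpoint of block $i$ (and one checks $s_1$ may be taken $0$ since $A(0)=0$; for the statement as written we just need *some* shifts, and these work). Actually, to match the statement I would set $s_i$ so that $Y_{i,s_i}$ starts at $A(j_{i-1})$, i.e. $s_i = A(j_{i-1}) - j_{i-1}$. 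Then the portion of $A$ on indices $[j_{i-1}, j_i]$ is *almost* an alignment of $X_i$ with $Y_{i,s_i}$, except its right endpoint lands at $A(j_i)$ rather than at $j_i + s_i = A(j_{i-1}) + (j_i - j_{i-1})$; correcting this costs an extra $|A(j_i) - j_i - s_i| = |(A(j_i)-j_i) - (A(j_{i-1}) - j_{i-1})|$ insertions/deletions. Hence $\ED(X_i, Y_{i,s_i}) \le \big(\text{cost of }A\text{ on block }i\big) + |\delta_i - \delta_{i-1}|$ where $\delta_i := A(j_i) - j_i$. Summing over $i$: $\sum_i \ED(X_i, Y_{i,s_i}) \le \ED(X,Y) + \sum_i |\delta_i - \delta_{i-1}|$. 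Now $|\delta_i - \delta_{i-1}|$ is bounded by the cost of $A$ restricted to block $i$ (each insertion/deletion changes $\delta$ by $1$, substitutions and matches leave it unchanged), so $\sum_i |\delta_i - \delta_{i-1}| \le \ED(X,Y)$, giving $\sum_i \ED(X_i, Y_{i,s_i}) \le 2\ED(X,Y)$. For the bound $2|s_i| \le \ED(X,Y)$: $|s_i| = |\delta_{i-1}| = |A(j_{i-1}) - j_{i-1}|$, and since $A$ is monotone with $A(0) = 0$, the displacement $|A(j) - j|$ at any point is at most the number of insertions/deletions $A$ has performed so far, hence at most $\ED(X,Y)$ — but this only gives $|s_i| \le \ED(X,Y)$, not $2|s_i| \le \ED(X,Y)$. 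To get the factor $2$ I would instead choose the shifts slightly more cleverly, e.g. round $s_i$ toward the better of the two "sides," or observe that by picking $s_i = A(j_{i-1}) - j_{i-1}$ we actually have $|s_i|$ bounded by the insertions/deletions *strictly before* block boundary $i-1$ on both the prefix and (symmetrically) by those after, so $2|s_i| \le \ED(X,Y)$; making this precise is the one genuinely delicate point.

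I expect the main obstacle to be exactly that last factor-of-$2$ bookkeeping in the lower-bound direction: showing simultaneously that the chosen shifts are small enough ($2|s_i| \le \ED(X,Y)$) and that the block edit distances sum to at most $2\ED(X,Y)$, with the same choice of shifts. The resolution is to track the "displacement function" $\delta(j) = A(j) - j$ along the optimal alignment and use that $\sum |\delta(j_i) - \delta(j_{i-1})|$ and each individual $|\delta(j_i)|$ are both controlled by disjoint (or appropriately split) portions of the insertion/deletion budget of $A$; a symmetry argument (the displacement at a cut is bounded by the indels on *either* side, so it is at most half the total) supplies the factor $2$. Everything else is routine concatenation of alignments and telescoping.
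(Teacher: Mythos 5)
Your proof follows essentially the same route as the paper's, and the ideas are all correct, but two remarks are worth making.

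For the first bullet, the paper does not build a concatenated alignment; it simply observes $\ED(Y_{i,s_i}, Y_i) \le 2|s_i|$ and applies the triangle inequality to get $\sum_i \ED(X_i, Y_{i,s_i}) + 2|s_i| \ge \sum_i \ED(X_i, Y_i) \ge \ED(X, Y)$. This cleanly sidesteps the ``bookkeeping verification'' you flag, which is in fact not entirely innocuous: when $s_i < s_{i-1}$ the $Y$-range for block $i$ begins before the $Y$-range for block $i-1$ ends, so the pieces do not tile $Y$ monotonically and cannot be literally concatenated into a single alignment. Your argument is salvageable (view it as $X \to \bigcirc_i Y_{i,s_i} \to Y$ with independent per-block transformations rather than as a single alignment), but that rephrasing \emph{is} the triangle-inequality proof. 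Also note $s_B$ is one of the given shifts, so you cannot impose $s_B=0$ as a convention; you want an auxiliary endpoint $s_{B+1}=0$ instead.

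For the second bullet, you were too pessimistic: the second option you sketch is exactly the paper's argument and it is not delicate at all. Pick $s_i = A(j_{i-1}) - j_{i-1}$. Since $A$ is an optimal alignment, $\ED(X,Y)$ splits as $\ED(X\range{0}{j_{i-1}}, Y\range{0}{A(j_{i-1})}) + \ED(X\range{j_{i-1}}{n}, Y\range{A(j_{i-1})}{n})$, and each summand is at least the corresponding length difference, which is $|s_i|$ in both cases (using $A(0)=0$, $A(n)=n$, $|X|=|Y|=n$). Hence $\ED(X,Y) \ge 2|s_i|$ immediately. Your displacement-function argument for $\sum_i \ED(X_i, Y_{i,s_i}) \le 2\ED(X,Y)$ is the paper's, phrased in terms of $\delta_i = A(j_i) - j_i$. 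So the factor-of-$2$ ``obstacle'' you anticipate dissolves once you commit to the prefix/suffix split; no rounding or cleverer choice of shifts is needed.
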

\medskip

To explain how to apply \cref{lem:divide-and-conquer}, we first specify on which substrings our algorithm is supposed to recur. To this end, let $T$ be a balanced $B$-ary tree with~$n$ leaves. $T$ will act as the~``recursion tree'' of the algorithm. For a string $X$ of length~$n$, we define a substring~$X_v$ for every node $v$ in $T$ as follows: If the subtree below $X_v$ spans from the $i$-th to the $j$-th leaf (ordered from left to right), then we set~$X_v = X \range{i}{j+1}$. In particular, $X_v$ is a single character for each leaf $v$, and~$X_{\Root(T)} = X$. We further define~$X_{v, s} = X \range{i + s}{j + s + 1}$. For concreteness, we set $B = 2^{\sqrt{\log n \log \log n}} = n^{o(1)}$ throughout the paper.

\subparagraph*{A Simple Algorithm}
Based on \cref{lem:divide-and-conquer}, we next present a simple (yet slow) algorithm. Our goal is to compute, for each node $v$ in the tree $T$, an approximation~\raisebox{0pt}[0pt][0pt]{$\widetilde\ED(X_v, Y_{v, s})$} of~$\ED(X_v, Y_{v, s})$ for all shifts $s$. The result at the root node is returned as the desired approximation of $\ED(X, Y)$. The algorithm works as follows: For each leaf we can cheaply compute~$\ED(X_v, Y_{v, s})$ exactly by comparing the single characters~$X_v$ and~$Y_{v, s}$. For each internal node with children~$v_1, \dots, v_B$ we compute
\begin{equation} \label{eq:divide-and-conquer}
    \widetilde\ED(X_v, Y_{v, s}) = \sum_{i=1}^B \min_{s_i \in \Int} \widetilde\ED(X_{v_i}, Y_{v_i, s_i}) + 2 |s - s_i|.
\end{equation}
A careful application of \cref{lem:divide-and-conquer} shows that if the recursive approximations $\ED(X_{v_i}, Y_{v_i, s_i})$ have multiplicative error at most $\alpha$, then by approximating $\ED(X_v, Y_{v, s})$ as in~\eqref{eq:divide-and-conquer} the multiplicative error becomes~$2\alpha + B$. Since we repeat this argument recursively up to depth $\depth(T) \leq \log_B(n)$, the multiplicative error accumulates to $B \cdot \exp(\Order(\log_B(n))) = n^{\order(1)}$.

This simple algorithm achieves the desired approximation quality, however, it is not fast enough: For every node $v$ we have to compute $\widetilde\ED(X_v, Y_{v, s})$ for too many shifts $s$ (naively speaking, for up to $n$ shifts). As a first step towards dealing with this issue, we first show that at every node $v$ we can in fact tolerate a certain \emph{additive} error (in addition to the multiplicative error discussed before) using a technique called \emph{Precision Sampling}. Then we exploit the freedom of additive errors to run this algorithm for a \emph{restricted} set of shifts $s$.

\subparagraph*{Second Ingredient: Precision Sampling}
It ultimately suffices to compute an approximation of $\ED(X, Y)$ with \emph{additive} error $k$ in order to solve the constant-gap edit distance problem. We leverage this freedom to also solve the recursive subproblems up to some additive error. Specifically, we will work with the following data structure:

\begin{definition}[Precision Tree]
Let $T$ be a balanced $B$-ary tree with $n$ leaves. For~\makebox{$t \in \Nat$}, we randomly associate a \emph{tolerance $t_v$} to every node $v$ in $T$ as follows:
\begin{itemize}
\item If $v$ is the root, then set $t_v = t$;
\item otherwise set $t_v = t_{\mathsf{parent}(v)} \cdot u_v / 3$, where we sample $u_v \sim \Exp(\Order(\log n))$ (the exponential distribution with parameter $\Order(\log n)$).
\end{itemize}
We refer to $T$ as a \emph{precision tree with initial tolerance $t$}.
\end{definition}

The tolerance $t_v$ at a node $v$ determines the additive error which we can tolerate at~$v$. That is, our goal is to approximate $\ED(X_v, Y_{v, s})$ with additive error $t_v$ (and the same multiplicative error as before). The initial tolerance is set to $t = k$. The critical step is how to combine the recursive approximations with additive error $t_{v_1}, \dots, t_{v_B}$ to an approximation with additive error $t_v$. The naive solution would incur error $\sum_i t_{v_i} \gg t_v$. Instead, we employ the \emph{Precision Sampling Lemma}~\cite{AndoniKO10,AndoniKO11,Andoni17,BringmannCFN22} (see \cref{lem:psl} in \cref{sec:proofs}) to recombine the recursive approximation and avoid this blow-up in the additive error.

\subparagraph*{An Improved Algorithm}
We can now improve the simple algorithm to a near-linear-time algorithm. In the original Andoni-Krauthgamer-Onak algorithm this was achieved by pruning most recursive subproblems (depending on their tolerances $t_v$). We will follow a different avenue: Our algorithm recurs on every node in the precision tree, and we obtain a linear-time algorithm by bounding the expected running time per node by $n^{\order(1)}$.

We achieve this by the following novel trick: We restrict the set of feasible shifts $s$ at each node $v$ with respect to the tolerance $t_v$. In fact, we require two constraints: First, we restrict~$s$ to values smaller than $\approx k$ in absolute value. This first restriction is correct since we only want to maintain edit distances bounded by $k$; this idea was also used in previous works. Second, we restrict the feasible shifts $s$ at any node $v$ to multiples of~$\floor{t_v / 2}$. Then, in order to approximate $\ED(X_v, Y_{v, s})$ for any shift $s$ we let $\tilde s$ denote the closest multiple of~$\floor{t_v / 2}$ to $s$, and approximate $\ED(X_v, Y_{v, s})$ by $\ED(X_v, Y_{v, \tilde s})$. Since $|s - \tilde s| \leq t_v / 2$, both edit distances differ by at most $t_v$. Recall that we can tolerate this error using the Precision Sampling technique. Let $S_v = \set{-k \cdot n^{\order(1)}, \dots, k \cdot n^{\order(1)}} \cap \floor{t_v/2} \Int$ denote the set of shifts respecting these restrictions (the precise lower-order term $n^{\order(1)}$ will be fixed later).

In terms of efficiency, we have improved as follows: At every node the running time is essentially dominated by the number of feasible shifts $s$. Using our discretization trick, there are only $|S_v| = k \cdot n^{\order(1)} / t_v$ such shifts. By the following \cref{lem:expected-precision}, we can bound this number in expectation by~\makebox{$k \cdot n^{\order(1)} / t_{\Root(T)} = n^{\order(1)}$}.

\begin{restatable}[Expected Precision]{lemma}{lemexpectedprecision} \label{lem:expected-precision}
Let $T$ be a $B$-ary precision tree with initial tolerance $t$ and $B = \exp(\widetilde \Theta(\sqrt{\log n}))$, and let $v$ be a node in $T$. Then, conditioned on a high-probability event $E$, it holds that
\begin{equation*}
    \Ex\left(\frac1{t_v} \;\middle|\; E\right) \leq \frac{(\log n)^{\Order(\depth(v))}}{t} \leq \frac{n^{\order(1)}}{t}.
\end{equation*}
\end{restatable}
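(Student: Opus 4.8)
The plan is to condition on a single global event forcing every sampled multiplier away from $0$, and then to compute the conditional expectation by unrolling the recursive definition of the tolerances and using independence.

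\emph{The event $E$.} I would let $E$ be the event that $u_w \ge n^{-c}$ for every non-root node $w$ of $T$, for a sufficiently large constant $c$. Since $u_w \sim \Exp(\Theta(\log n))$, we have $\Pr[u_w < n^{-c}] = 1 - e^{-\Theta(\log n)\cdot n^{-c}} \le \Theta(\log n)\cdot n^{-c}$, and as $T$ has only $\Order(n)$ nodes a union bound gives $\Pr[\,\overline{E}\,] = \Order(n^{1-c}\log n) \le 1/\poly(n)$ for $c$ large enough; hence $E$ holds with high probability, and it does not depend on the choice of $v$. Conditioning here is unavoidable: without it $\Ex[1/t_v] = \infty$, because the exponential density $\Theta(\log n)\cdot e^{-\Theta(\log n)x}$ is bounded below near $x=0$ and $\int_0 x^{-1}\,dx$ diverges there.

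\emph{Unrolling and factorizing.} Fix $v$ and let $v = w_0, w_1, \dots, w_d = \Root(T)$ be the root-to-$v$ path, with $d = \depth(v)$ and $w_{i+1} = \parent(w_i)$. Unrolling $t_w = t_{\parent(w)}\cdot u_w/3$ and using $t_{\Root(T)} = t$,
\[
    \frac1{t_v} = \frac{3^d}{t}\prod_{i=0}^{d-1}\frac1{u_{w_i}}.
\]
The right-hand side depends only on the multipliers along the path, which are mutually independent and independent of all other multipliers. Writing $E = A_v \cap B_v$, where $A_v = \bigcap_{i<d}\set{u_{w_i} \ge n^{-c}}$ involves only the path multipliers and $B_v$ only the remaining ones, independence gives $\Ex(1/t_v \mid E) = \Ex(1/t_v \mid A_v)$, and then
\[
    \Ex\!\left(\frac1{t_v} \;\middle|\; E\right) = \frac{3^d}{t}\prod_{i=0}^{d-1}\Ex\!\left(\frac1{u_{w_i}} \;\middle|\; u_{w_i} \ge n^{-c}\right).
\]

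\emph{One factor, and the depth bound.} It remains to bound $\Ex(1/u \mid u \ge \epsilon)$ for $u \sim \Exp(\lambda)$ with $\lambda = \Theta(\log n)$ and $\epsilon = n^{-c}$. Since $\Pr[u \ge \epsilon] = e^{-\lambda\epsilon} \ge 1/2$,
\[
    \Ex\!\left(\frac1u \;\middle|\; u \ge \epsilon\right) \le 2\int_\epsilon^\infty \frac{\lambda e^{-\lambda x}}{x}\,dx = 2\lambda\int_{\lambda\epsilon}^\infty \frac{e^{-y}}{y}\,dy \le 2\lambda\bigl(\ln(1/(\lambda\epsilon)) + 1\bigr) = (\log n)^{\Order(1)},
\]
where the last inequality splits the integral at $1$ and bounds $e^{-y}/y$ by $1/y$ on $(\lambda\epsilon, 1]$ and by $e^{-y}$ on $(1,\infty)$; a dyadic split of the integral gives the same bound if one prefers to avoid the special function. (If the parameter of $\Exp$ denotes scale rather than rate, the bound only improves.) Substituting into the product and absorbing $3^d \le (\log n)^{\Order(d)}$ yields $\Ex(1/t_v \mid E) \le (\log n)^{\Order(\depth(v))}/t$. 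For the last inequality in the statement, $T$ is balanced $B$-ary with $n$ leaves and $B = \exp(\widetilde\Theta(\sqrt{\log n}))$, so $\depth(v) \le \log_B n = \Order(\sqrt{\log n / \log\log n})$ and hence $(\log n)^{\Order(\depth(v))} = \exp(\Order(\sqrt{\log n \log\log n})) = n^{o(1)}$.

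I expect the main obstacle to be conceptual rather than computational: realizing that the expectation is infinite without truncation, choosing the truncation as one global high-probability event, and then arguing --- via independence of the path multipliers from everything else --- that conditioning the path-product on $E$ coincides with conditioning it on the path-only event $A_v$, so that the expectation factorizes over the $d$ levels. After that, the per-level estimate and the conversion of the depth bound into an $n^{o(1)}$ factor are routine.
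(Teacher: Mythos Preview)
Your proposal is correct and follows essentially the same approach as the paper: define $E$ as the global event that all multipliers are at least $n^{-c}$, unroll $1/t_v$ as a product along the root-to-$v$ path, bound a single factor $\Ex(1/u\mid u\ge\epsilon)=O(\log^2 n)$ via an integral split at $1$, and finish with $\depth(v)\le\log_B n=\widetilde O(\sqrt{\log n})$. If anything, you are more careful than the paper about justifying why the conditional expectation factorizes (your $E=A_v\cap B_v$ decomposition), a step the paper glosses over.
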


We include a proof in \cref{sec:divide-and-conquer}. For technical reasons, the lemma is only true conditioned on some high-probability event $E$. For the remainder of this paper, we implicitly condition on this event~$E$.

\subsection{How to Go Sublinear?}
Following the idea of~\cite{BringmannCFN22}, our strategy is to turn this algorithm into a sublinear-time algorithm by not exploring the whole precision tree recursively, and instead only exploring a smaller fragment. To achieve this, the goal is to approximate $\ED(X_v, Y_{v, s})$ for many nodes $v$ directly, without the need to explore their children---we say that we \emph{prune} $v$. The algorithm by~\cite{BringmannCFN22} uses several structural insights on periodic versus non-periodic strings to implement pruning rules. We can avoid the complicated treatment and follow a much simpler avenue, exploiting that we can preprocess the strings.

In the following, we will assume that we have access to an oracle answering the following two queries. In the next section we will argue how to efficiently implement data structures to answer these queries.
\begin{itemize}
\smallskip
\item $\Matching(X, Y, v)$: Returns either $\Close[s^*]$ where $s^*$ satisfies $|s^*| \leq k \cdot n^{\order(1)}$ and $\HD(X_v, Y_{v, s^*}) \leq t_v / 2$, or $\Far$ in case that there is no shift $s^*$ with \makebox{$X_v = Y_{v, s^*}$}, see \cref{def:matching-queries}. (Note that if~$1 \leq \min_{s^*} \HD(X_v, Y_{v,s^*}) \leq t_v/2$, the query can return either $\Close[s^*]$ or $\Far$.)
\smallskip
\item $\ShiftedDistance(Y, v, s, s')$: For shifts $|s|, |s'| \leq k \cdot n^{\order(1)}$, returns an approximation of $\ED(Y_{v, s}, Y_{v, s'})$ with additive error $t_v / 2$ (and $n^{o(1)}$-multiplicative error, see \cref{def:shifted-distance-queries}).
\smallskip
\end{itemize}
Suppose for the moment that both queries can be answered in constant time. Then we can reduce the running time of the previous linear-time algorithm to time $k \cdot n^{\order(1)}$ as follows: We try to prune each node $v$ by querying $\Matching(X, Y, v)$. If the \Matching{} query reports~\Far, we simply continue recursively as before (i.e., no pruning takes place). However, if the matching query reports $\Close[s^*]$, then we can prune $v$ as follows: Query $\ShiftedDistance(Y, v, s^*, s)$ for all shifts $s \in S_v$ and return the outcome as an approximation of~$\ED(X_v, Y_{v, s})$ for all shifts $s$. For the correctness we apply the triangle inequality and argue that the additive error is bounded by $t_v / 2 + t_v / 2 = t_v$.

It remains to argue that the number of recursive computations is bounded by $k \cdot n^{\order(1)}$. The intuitive argument is as follows: Assume that $\ED(X, Y) \leq k$ (i.e., we are in the~``close'' case) and consider an optimal alignment between $X$ and $Y$, which contains at most $k$ mismatches. Recall that each level of the precision tree induces a partition of $X$ into consecutive substrings~$X_v$. Thus, there are at most $k$ substrings $X_v$ which contain a mismatch in the optimal alignment. For all \emph{other} substrings, there are no mismatches and hence $X_v = Y_{v, s^*}$ for some shift $s^*$. It follows that on every level there are at most $k$ nodes for which the~\Matching{} test fails, and in total there are only $k \cdot \height(T) = \Order(k \log n)$ such nodes.

\subsection{How to Answer Matching and Shifted-Distance Queries?}
It remains to find data structures which answer these queries efficiently. We assume that the precision tree $T$ has been generated in advance and is shared across all precomputations. In particular, this requires ``public randomness'' for the otherwise independent precomputations.

\subparagraph*{Matching Queries}
The idea is to precompute and store fingerprints (i.e.\ hashes) of the substrings~$Y_{v,s}$ for every node $v$ in the partition tree and all shifts $s \in \set{-k \cdot n^{\order(1)}, \dots, k \cdot n^{\order(1)}}$. Then, to answer a query we simply compute the fingerprint of $X_v$ and lookup whether it equals one of the precomputed fingerprints. Alas, a naive implementation of this idea is too slow since upon query we might need to read the whole string $X$. To obtain the desired sublinear query time, we instead subsample the strings $X_v$ and $Y_{v,s}$ with rate $\approx 1/t_v$. In this way we incur additive Hamming error at most $t_v / 2$, as desired. Formally, we show the following lemma in \cref{sec:proofs:sec:queries}:

\begin{restatable}[Matching Queries]{lemma}{lemmatchingqueries}\label{lem:matching-queries}
We can preprocess $Y$ in expected time $n^{1+\order(1)}$ to answer $\Matching(X, Y, v)$ queries in time $\widetilde\Order(|X_v| / t_v)$ with high probability. Moreover, we can separately preprocess both $X$ and $Y$ in expected time $n^{1+\order(1)}$ to answer $\Matching(X, Y, v)$ queries in time $\Order(1)$ with high probability.
\end{restatable}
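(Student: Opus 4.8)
The plan is to answer $\Matching$ queries by comparing Karp--Rabin fingerprints of \emph{subsampled} substrings. All preprocessing procedures share public randomness: the precision tree $T$ with its tolerances $(t_v)_v$, a random Karp--Rabin hash $h$ modulo a prime $p = n^{\Theta(1)}$ (with a large constant in the exponent), and, for every node $v$, a random set $R_v \subseteq \rangezero{|X_v|}$ of \emph{sample offsets} in which each offset is included independently with probability $p_v = \min\set{1,\, (c\log n)/t_v}$ for a large constant $c$. For a node $v$ and a shift $s$, let $\phi_Y(v,s)$ be the fingerprint under $h$ of the string obtained by reading $Y_{v,s}$ at the offsets in $R_v$ in increasing order, and define $\phi_X(v)$ the same way from $X_v$. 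Two facts drive the analysis. First, by a Chernoff bound and a union bound over the $\poly(n)$ pairs $(v,s)$: with high probability, whenever $\HD(X_v, Y_{v,s}) > t_v/2$ the set $R_v$ contains an offset at which $X_v$ and $Y_{v,s}$ disagree (if $p_v = 1$ this is immediate; otherwise $\HD(X_v,Y_{v,s}) > t_v/2$ forces $|X_v| > t_v/2$, and the expected number of sampled disagreements is at least $p_v \cdot t_v/2 = (c/2)\log n$). Second, with high probability over $h$, distinct subsampled substrings receive distinct fingerprints (union-bounding over the $\poly(n)$ relevant substrings). Condition on both events.

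\textbf{One-sided preprocessing.}
To preprocess $Y$, store the sets $(R_v)_v$ and, for every node $v$ with $R_v \neq \emptyset$ and every shift $s$ in the window $S := \set{-k\cdot n^{\order(1)}, \dots, k\cdot n^{\order(1)}}$, the fingerprint $\phi_Y(v,s)$, inserted into a dictionary $D_v$ mapping a fingerprint to a shift realizing it; for nodes with $R_v = \emptyset$ store the answer $\Close[0]$. To answer $\Matching(X,Y,v)$, read $X_v$ at the offsets in $R_v$, compute $\phi_X(v)$, and look it up in $D_v$: return $\Close[s]$ if $\phi_X(v) = \phi_Y(v,s)$ for some stored $s$, and $\Far$ otherwise. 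Correctness: if we output $\Close[s]$ then, by the no-collision event, $X_v$ and $Y_{v,s}$ agree on all of $R_v$, so by the first event $\HD(X_v,Y_{v,s}) \le t_v/2$, and $|s| \le k \cdot n^{\order(1)}$ since $s \in S$; if $R_v = \emptyset$ the output $\Close[0]$ is valid because the first event then forces $\HD(X_v,Y_{v,s}) \le t_v/2$ for all $s$. If we output $\Far$ then $\phi_X(v) \neq \phi_Y(v,s)$ for every $s \in S$, hence $X_v \neq Y_{v,s}$ for every $s \in S$ (equal strings have equal fingerprints), which is the $\Far$ guarantee. The query time is dominated by reading $R_v$ and one dictionary lookup; since $|R_v| \le |X_v|$ always and $|R_v| = \Order(p_v|X_v| + \log n)$ with high probability, this is $\widetilde\Order(|X_v|/t_v)$.

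\textbf{Preprocessing time.}
Computing one fingerprint $\phi_Y(v,s)$ from scratch costs $\widetilde\Order(|R_v|)$, so the total time is $\widetilde\Order\bigl(|S| \cdot \sum_v |R_v|\bigr)$ plus $\Order(\sum_v |X_v|) = \Order(n\log_B n) = n^{1+\order(1)}$ for sampling the sets $R_v$ and for touching every node. Now $\Ex\bigl[|R_v|\bigr] = \Ex[p_v]\cdot|X_v| \le (c\log n)\,|X_v|\cdot\Ex[1/t_v]$, which by \cref{lem:expected-precision} (recall the initial tolerance is $k$) is at most $(c\log n)\,|X_v|\cdot(\log n)^{\Order(\depth(v))}/k$. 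Summing over all nodes, using $\sum_{v:\,\depth(v)=d}|X_v| = n$ and $\sum_{d=0}^{\log_B n}(\log n)^{\Order(d)} = (\log n)^{\Order(\log_B n)} = n^{\order(1)}$, gives $\Ex\bigl[\sum_v |R_v|\bigr] \le n^{1+\order(1)}/k$. Since $|S| = k\cdot n^{\order(1)}$, the expected preprocessing time is $n^{1+\order(1)}$.

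\textbf{Two-sided preprocessing and the main obstacle.}
For the second statement we also preprocess $X$: reusing the same public randomness (so the $R_v$ and $h$ match those on the $Y$-side), compute and store $\phi_X(v)$ for every node $v$; by the same estimate this takes $\widetilde\Order(\sum_v|R_v|) = n^{1+\order(1)}$ expected time (now without the factor $|S|$). A query then only looks up the stored $\phi_X(v)$ and then looks it up in $D_v$; realizing both dictionaries as hash tables, this runs in $\Order(1)$ time with high probability, and correctness is as above. The main obstacle is precisely the preprocessing-time bound: naively there are $\Theta(n)$ nodes each carrying $k\cdot n^{\order(1)}$ shifts, for a prohibitive $n k \cdot n^{\order(1)}$; the reason it collapses to $n^{1+\order(1)}$ is the interplay of subsampling (which reduces the per-(node,shift) cost to $\widetilde\Order(|R_v|)$) with \cref{lem:expected-precision} (which yields the cancellation $\sum_v \Ex[|R_v|] \le n^{1+\order(1)}/k$ against the factor $k\cdot n^{\order(1)}$). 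A minor point is to set up the subsampled fingerprints so that equality of fingerprints exactly reflects equality of the two sampled subsequences --- reading $R_v$ in the same order and weighting the $j$-th sampled position by the same power of the Karp--Rabin base on both the $X$- and $Y$-side --- which is what makes the deterministic $\Far$ verdict sound.
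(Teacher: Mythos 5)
Your proposal is correct and follows essentially the same approach as the paper: subsample each $X_v, Y_{v,s}$ at rate $\Theta(\log n / t_v)$, fingerprint the sampled subsequences, store the $Y$-side fingerprints for all $s \in S$ in a dictionary, and (for two-sided preprocessing) also precompute the $X$-side fingerprints; the preprocessing-time bound falls out of the same application of \cref{lem:expected-precision}. The only differences are cosmetic — you instantiate the hash as Karp--Rabin while the paper invokes a generic universal family, and you explicitly handle the $R_v = \emptyset$ edge case, which the paper leaves implicit.
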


The key difference between the one-sided and the two-sided preprocessing is that in the former we need to compute the fingerprint for $X_v$, which takes time $\Order(|X_v|/t_v)$ (we shave the factor $t_v$ due to the subsampling), while in the latter we can afford to precompute these fingerprints and answer the queries faster.

\subparagraph*{Shifted-Distance Queries}
We give two ways to answer \ShiftedDistance{} queries. The first one relies in a black-box manner on any almost-linear-time algorithm to compute an edit distance approximation with multiplicative error $n^{\order(1)}$~\cite{AndoniKO10,AndoniO12,AndoniN20}. Using the same trick as before to restrict the set of feasible shifts, it suffices to approximate $\ED(Y_{v, s}, Y_{v, s'})$ for all shifts~$s \in S_v$ and~$s' \in \set{-k \cdot n^{\order(1)}, \dots, k \cdot n^{\order(1)}}$. (We could also discretize the range of $s'$, but this does not improve the performance here.) In this way, we incur an additive error of at most $\Order(t_v)$. The computation per node $v$ takes time $|Y_v| \cdot k \cdot k \cdot n^{\order(1)} / t_v$, which becomes~$k n^{1+\order(1)}$ in expectation by~\cref{lem:expected-precision} and by summing over all nodes $v$. 

We then show that we can improve the preprocessing time to $n^{1+o(1)}$ by applying the non-oblivious embedding of edit distance into $\ell_1$ by Andoni and Onak~\cite{AndoniO12}. Formally we obtain the following lemma, which we prove in~\cref{sec:proofs:sec:fast-preprocessing}.

\begin{restatable}[Shifted-Distance Queries]{lemma}{lemshifteddistancequeries}\label{lem:shifted-distance-queries}
We can preprocess $Y$ in time $n^{1+\order(1)}$ to answer $\ShiftedDistance(Y, v, s, s')$ queries in time $n^{o(1)}$ with high probability.
\end{restatable}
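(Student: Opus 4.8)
The plan is to implement \ShiftedDistance{} queries directly on top of the non-oblivious embedding of edit distance into~$\ell_1$ due to Andoni and Onak~\cite{AndoniO12}. The preprocessing of~$Y$ builds a data structure realizing the following property, which I extract from~\cite{AndoniO12} (it is implicit in their near-linear-time algorithm): after an $n^{1+\order(1)}$-time pass over~$Y$ there is a map~$\sigma$ assigning to every substring $Y\range{a}{b}$ an $n^{\order(1)}$-dimensional sketch vector $\sigma(Y\range{a}{b})$, which is computable in time $n^{\order(1)}$ from the indices $a, b$ alone, such that for any two substrings $Y\range{a}{b}$ and $Y\range{a'}{b'}$ of~$Y$ we have, with high probability, $\ED(Y\range{a}{b}, Y\range{a'}{b'}) \le \|\sigma(Y\range{a}{b}) - \sigma(Y\range{a'}{b'})\|_1 \le n^{\order(1)} \cdot \ED(Y\range{a}{b}, Y\range{a'}{b'})$. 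Two features of this property are essential: all substrings are embedded with respect to the \emph{same} string~$Y$ and the \emph{same} randomness, so that all pairwise sketch distances are simultaneously meaningful---this is exactly what ``non-oblivious'' provides---and the embedding, composed with a linear $\ell_1$-sketch, can be organized along the hierarchical, geometric scales of the construction, so that the sketch of a single substring is retrievable in subpolynomial time.

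Given this data structure, a query is answered immediately. For $\ShiftedDistance(Y, v, s, s')$ with $|s|, |s'| \le k \cdot n^{\order(1)}$: the precomputed layout of the precision tree tells us which two substrings of~$Y$ the strings $Y_{v, s}$ and $Y_{v, s'}$ are; we retrieve $\sigma(Y_{v, s})$ and $\sigma(Y_{v, s'})$ in time $n^{\order(1)}$ each and return $\|\sigma(Y_{v, s}) - \sigma(Y_{v, s'})\|_1$. By the embedding guarantee this is a multiplicative $n^{\order(1)}$-approximation of $\ED(Y_{v, s}, Y_{v, s'})$, which in particular satisfies the combined ``additive error $t_v/2$, multiplicative error $n^{\order(1)}$'' requirement of~\cref{def:shifted-distance-queries}---the additive slack is not even needed. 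The query time is $n^{\order(1)}$. To make the single random embedding correct for all queries the algorithm may issue, observe that there are only $\Order(n)$ nodes in~$T$ and $k \cdot n^{\order(1)} \le \poly(n)$ relevant shifts, hence $\poly(n)$ relevant substrings and $\poly(n)$ relevant pairs; boosting the sketch dimension (and the number of independent repetitions of the embedding) by a constant factor drives the per-pair failure probability below $1/\poly(n)$ with an arbitrarily large exponent, and a union bound completes the argument.

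The main obstacle is establishing the extracted property, i.e.\ turning the Andoni--Onak \emph{algorithm}---which as a black box only estimates the edit distance of two given strings in near-linear time---into a \emph{preprocessed substring-embedding} data structure. Two points must be checked: that the embedding can be built from~$Y$ alone (rather than from a pair of strings), so that substrings of~$Y$ are embedded consistently, which is what the non-oblivious embedding of~\cite{AndoniO12} achieves; and that the scale-by-scale recursive computation of~\cite{AndoniO12} can be arranged so that each scale's contribution to the sketch of $Y\range{a}{b}$ is recoverable in time $n^{\order(1)}$ after a single $n^{1+\order(1)}$-time global pass over~$Y$, which is where the hierarchical decomposition and the linearity of the $\ell_1$-sketch (it commutes with the scale-wise sums) come in. For contrast, the purely black-box route---precomputing a near-linear $n^{\order(1)}$-approximation of $\ED(Y_{v, s}, Y_{v, s'})$ for every node~$v$, every $s \in S_v$, and every~$s'$ in the shift range---costs about $|Y_v| \cdot |S_v| \cdot k \cdot n^{\order(1)}$ per node, which sums (using $\Ex(1/t_v) \le n^{\order(1)}/t$ from~\cref{lem:expected-precision}) only to $k \cdot n^{1+\order(1)}$; discretizing the $s'$-range as well would require bounding $\Ex(1/t_v^2)$, which is not polynomially bounded, and hence does not help. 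Answering queries online through the embedding sidesteps building this table altogether, and this is what brings the preprocessing down to $n^{1+\order(1)}$.
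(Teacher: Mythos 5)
Your approach is essentially the paper's: both go through the non-oblivious embedding of Andoni and Onak to map substrings of~$Y$ into $\ell_1$ and answer a query by computing an $\ell_1$-distance between two embedded vectors. The paper packages the needed guarantee as a citable theorem (their \cref{thm:embedding-ao}): for each geometric scale $m = \lfloor n/B^i\rfloor$, \emph{all} length-$m$ substrings of~$Y$ can be embedded into $n^{\order(1)}$-dimensional vectors in total time $n^{1+\order(1)}$, so that every pairwise $\ell_1$-distance simultaneously (whp) approximates the corresponding edit distance within a multiplicative factor $n^{\order(1)}$; since $Y_{v,s}$ and $Y_{v,s'}$ always have length $n/B^{\depth(v)}$, which is one of those scales, the query is a plain lookup plus an $n^{\order(1)}$-dimensional $\ell_1$-distance.

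You introduce an unnecessary obstacle and then (correctly) flag it as unresolved. You insist that the sketch $\sigma(Y\range{a}{b})$ be \emph{recomputable} in time $n^{\order(1)}$ from the indices $a,b$ after a single $n^{1+\order(1)}$-time pass, and spend the final paragraph worrying about whether the scale-by-scale recursion of~\cite{AndoniO12} can be arranged to support this online retrieval. That property is not needed. There are $\Order(n)$ substrings per scale and $\Order(\log_B n)$ scales, each mapped to an $n^{\order(1)}$-dimensional vector, so the total size of all embedded vectors is $n^{1+\order(1)}$: one can simply store them all during preprocessing and answer queries by table lookup, which is exactly what the paper does. Once you replace "retrievable in $n^{\order(1)}$ time" by "precomputed and stored", the gap you identified as the main obstacle disappears, and the rest of your argument (union bound over $\poly(n)$ relevant pairs, no additive error needed, the contrast with the $k\,n^{1+\order(1)}$ black-box table) is correct and matches the paper.

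One further minor over-claim: you say the map $\sigma$ applies to \emph{every} substring $Y\range ab$, but the Andoni--Onak construction only embeds substrings whose length is one of the $\log_B n$ geometric scales $\lfloor n/B^i\rfloor$. This is harmless here since the strings $Y_{v,s}$ and $Y_{v,s'}$ at a fixed node $v$ all share such a length, but the precise statement should be restricted to those scales.
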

\section{Our Algorithm in Detail} \label{sec:proofs}
In this section we give a detailed proof of our main theorems by analyzing \cref{alg:pruning-rule,alg:main} (see \cref{sec:overview:sec:pruning-rule,sec:overview:sec:main}). \cref{alg:main} is the previously discussed reformulation of the Andoni-Krauthgamer-Onak algorithm, with the improvement that the recursive computation can be avoided whenever \cref{alg:pruning-rule} succeeds. \cref{alg:pruning-rule} implements the pruning rule which, as we will argue, triggers often enough to improve the running time.

Throughout we fix the initial tolerance of the precision tree~$T$ to $t = t_{\Root(T)} = k$. Moreover, we set $S = \set{-k\cdot 3^{\log_B(n)}, \dots, k \cdot 3^{\log_B(n)}}$ and $S_v = S \cap \floor{t_v / 2} \Int$. Observe that~$|S| = k \cdot 3^{\log_B(n)} = k \cdot n^{\order(1)}$ for our choice of $B = \exp(\widetilde\Order(\sqrt{\log n}))$.

\subparagraph*{Outline}
Compared to the technical overview, we present the details in the reverse order: Starting with the implementation of the data structures for \Matching{} and \ShiftedDistance{} queries (in \cref{sec:proofs:sec:queries}), we first analyze \cref{alg:pruning-rule} (in \cref{sec:overview:sec:pruning-rule}). Then we analyze \cref{alg:main} (in \cref{sec:overview:sec:main}) and put the pieces together for our main theorems (in \cref{sec:overview:sec:assemble}).

\subsection{Matching Queries} \label{sec:proofs:sec:queries}
In this section we show to implement data structures that answer \Matching{} queries. We start with a formal definition.

\begin{definition}[Matching Queries] \label{def:matching-queries}
Let $X, Y$ be strings of length $n$, and let $v$ be a node in the precision tree. A $\Matching(X, Y, v)$ query is correctly answered by one of two outputs:
\begin{itemize}
\item $\Close[s^*]$ for some shift $s^* \in S$ satisfying $\HD(X_v, Y_{v, s^*}) \leq t_v/2$, or
\item $\Far$ if there exists no shift $s^* \in S$ with $X_v = Y_{v, s^*}$.
\end{itemize}
\end{definition}

\lemmatchingqueries*
\begin{proof}
We first explain the general idea behind the data structure and then point out the specifics for the one-sided and two-sided preprocessing. For each node $v$ in the precision tree, we subsample a set $H_v \subseteq \rangezero{|Y_v|}$ with rate $\Theta(\log n / t_v)$ and sample a hash function~\makebox{$h : \Sigma^{|H_v|} \to \rangezero{\poly (n)}$} from any universal family of hash functions. (Take for instance the function $h(\sigma_1, \dots, \sigma_{|H_v|}) = \sum_i a_i \sigma_i \bmod p$ for some prime $p = \poly (n)$ and random $a_i$'s).

We associate to a string $A \in \Sigma^{|Y_v|}$ the \emph{fingerprint} $h(A \access{H_v})$, where we write $A \access{H_v}$ for the subsequence of $A$ with indices in $H_v$. We claim that these fingerprints can distinguish any two strings $A, B$ with $\HD(A, B) > t_v / 2$ with high probability. Indeed, the probability that $H_v$ contains an index $i$ with $A \access i \neq B \access i$ is at least
\begin{equation*}
    1 - \left(1 - \frac{\Omega(\log n)}{t_v}\right)^{\HD(A, B)} \geq 1 - \exp(-\Omega(\log n)) = 1 - \frac1{\poly (n)}.
\end{equation*}
Hence, with high probability we have that $A \access{H_v} \neq B \access{H_v}$. Moreover, since $h$ is sampled from a universal family of hash functions, we also have that $h(A \access{H_v}) \neq h(B \access{H_v})$ with high probability. On the other hand, if $A = B$ then clearly $h(A \access{H_v}) = h(B \access{H_v})$. We now turn to the implementation details for one-sided and two-sided preprocessing.

\proofsubparagraph{One-Sided Preprocessing.}
In the preprocessing phase, we prepare for every node $v$ the fingerprints of $Y_{v, s}$ for all shifts $s \in S$, and store these fingerprints in a lookup table. In the query phase, given some string $X$ we compute the fingerprint of $X_v$ and check whether it appears in the lookup table. If so, we return $\Close[s^*]$ for the shift $s^*$ corresponding to the precomputed fingerprint. Otherwise, we return $\Far$. The correctness follows from the previous paragraph.

The expected running time of the preprocessing phase is bounded by $\sum_v \widetilde\Order(|Y_v| / t_v \cdot |S|)$ (for every node we have to prepare $|S|$ fingerprints, each taking time $\widetilde\Order(|Y_v| / t_v$)). This becomes $\sum_v \Order(|Y_v| \cdot n^{\order(1)} / t_{\Root(T)} \cdot k) = \sum_v \Order(|Y_v| \cdot n^{\order(1)}) = n^{1+\order(1)}$ in expectation over the tolerances $t_v$; see \cref{lem:expected-precision}. The query time is dominated by computing the fingerprint of~$X_v$ which takes expected time $\widetilde\Order(|X_v| / t_v)$ (even with high probability by an application of the Chernoff bound).

\proofsubparagraph{Two-Sided Preprocessing.}
For two-sided preprocessing, we can also prepare the fingerprints of $X_v$ for all nodes $v$ in the preprocessing phase. The expected preprocessing time is still~$n^{1+\order(1)}$, but for queries it only takes constant time to perform the lookup. 
\end{proof}

\subsection{Simple Shifted-Distance Queries}
We next demonstrate how to deal with \ShiftedDistance{} queries, formally defined as follows:

\begin{definition}[Shifted-Distance Queries] \label{def:shifted-distance-queries}
Let $Y$ be a string of length $n$, let $v$ be a node in the precision tree and let $s, s' \in S$. A $\ShiftedDistance(Y, v, s, s')$ query computes a number satisfying:
\begin{equation*}
    \ED(Y_{v, s}, Y_{v, s'}) - t_v / 2 \leq \ShiftedDistance(Y, v, s, s') \leq n^{\order(1)} \cdot (\ED(Y_{v, s}, Y_{v, s'}) + t_v / 2).
\end{equation*}
\end{definition}

We will first show how to implement a simpler version of~\cref{lem:shifted-distance-queries} at the cost of worsening the preprocessing time to $k n^{1+\order(1)}$ instead of $n^{1+\order(1)}$. The benefit of this weaker version is that it is a black-box reduction to any almost-linear time $n^{o(1)}$-approximation algorithm for edit distance, while the improved version crucially relies on the properties of the particular algorithm by Andoni and Onak~\cite{AndoniO12}. In the next section we show how to obtain the speed-up.

\begin{lemma}[Slower Shifted-Distance Queries]
We can preprocess $Y$ in time $kn^{1+\order(1)}$ to answer $\ShiftedDistance(Y, v, s, s')$ queries in time $n^{o(1)}$ with high probability.
\end{lemma}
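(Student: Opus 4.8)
The plan is to precompute, for every node $v$ of the precision tree, a lookup table of approximate shifted edit distances, and to answer each query by a single table lookup. Concretely, in the preprocessing phase I would run any almost‑linear‑time $n^{o(1)}$‑approximation algorithm for edit distance~\cite{AndoniKO10,AndoniO12,AndoniN20} (boosted, by independent repetition and taking the median, so that each invocation succeeds with probability $1 - 1/\poly(n)$) on the pair of length‑$|Y_v|$ windows $Y_{v,\tilde s}$ and $Y_{v,s'}$, for every \emph{discretized} shift $\tilde s \in S_v = S \cap \floor{t_v/2}\Int$ and every shift $s' \in S$, and store the outcome $D(v,\tilde s,s')$. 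Both windows can be extracted in time $\Order(|Y_v|)$ from random access to $Y$, so a single invocation costs $|Y_v|^{1+o(1)}$. To answer $\ShiftedDistance(Y,v,s,s')$ I round $s$ to the closest multiple $\tilde s$ of $\floor{t_v/2}$ and return $D(v,\tilde s,s')$; with a suitable table layout this is $\Order(1)$ time, which is certainly within the claimed $n^{o(1)}$. (When $\floor{t_v/2} = 0$ we instead take step size $1$, i.e.\ $S_v = S$; this enlarges the table only by a constant factor relative to the bound used below, and makes the rounding a no‑op.)

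For correctness, let $D = D(v,\tilde s,s')$; with high probability $\ED(Y_{v,\tilde s},Y_{v,s'}) \le D \le n^{o(1)} \cdot \ED(Y_{v,\tilde s},Y_{v,s'})$. Since $\tilde s$ is the closest multiple of $\floor{t_v/2}$ to $s$, we have $|s-\tilde s| \le \floor{t_v/2}/2 \le t_v/4$; and since $Y_{v,s}$ and $Y_{v,\tilde s}$ are length‑$|Y_v|$ windows of $Y$ differing only by a shift of $|s-\tilde s|$ positions, deleting the $|s-\tilde s|$ leading characters and appending the $|s-\tilde s|$ trailing ones shows $\ED(Y_{v,s},Y_{v,\tilde s}) \le 2|s-\tilde s| \le t_v/2$. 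Two applications of the triangle inequality then give
\begin{equation*}
    \ED(Y_{v,s},Y_{v,s'}) - t_v/2 \;\le\; \ED(Y_{v,\tilde s},Y_{v,s'}) \;\le\; D \;\le\; n^{o(1)}\bigl(\ED(Y_{v,s},Y_{v,s'}) + t_v/2\bigr),
\end{equation*}
which is exactly what \cref{def:shifted-distance-queries} demands.

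For the running time, the table at node $v$ has $|S_v|\cdot|S| = \Order(kn^{o(1)}/t_v)\cdot kn^{o(1)} = k^2 n^{o(1)}/t_v$ entries, each built in time $|Y_v|^{1+o(1)} = |Y_v|\cdot n^{o(1)}$; since the precision tree is sampled independently of $Y$, the expected work at $v$ is $\Order\bigl(|Y_v|\cdot k^2 n^{o(1)}\cdot\Ex(1/t_v)\bigr)$, which by \cref{lem:expected-precision} is $\Order(|Y_v|\cdot kn^{o(1)})$. Summing over all nodes on all $\log_B n$ levels and using $\sum_v |Y_v| = n\log_B n = n^{1+o(1)}$ yields total expected preprocessing time $kn^{1+o(1)}$, as claimed. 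The same sum bounds the total number of black‑box invocations by $kn^{1+o(1)} \le \poly(n)$, so a union bound shows that all table entries are correct with high probability.

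The only subtlety is the asymmetry of the error budget: we must use a black‑box approximation that \emph{over}-estimates (returning $D$ with $\ED \le D \le \alpha\ED$, which is without loss of generality for the cited algorithms), so that discretizing $s$ costs only the additive $t_v/2$ on the lower side while being multiplied by $n^{o(1)}$ on the upper side. Apart from this bookkeeping and the trivial small‑tolerance edge case there is no real obstacle here — which is exactly why this weaker bound is stated first; the genuine difficulty is deferred to \cref{lem:shifted-distance-queries}, where the extra factor $k$ in the preprocessing time is removed by exploiting the internal structure of the Andoni--Onak embedding rather than invoking it as a black box.
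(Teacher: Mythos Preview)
Your proposal is correct and follows essentially the same route as the paper: precompute an $n^{o(1)}$-approximate edit distance $\widetilde\ED(Y_{v,\tilde s},Y_{v,s'})$ for every node $v$, every discretized shift $\tilde s\in S_v$, and every $s'\in S$, then answer a query by rounding $s$ to the nearest $\tilde s\in S_v$ and looking up the stored value; the correctness argument (triangle inequality plus $\ED(Y_{v,s},Y_{v,\tilde s})\le 2|s-\tilde s|\le t_v/2$) and the running-time bound (via $\Ex(1/t_v)\le n^{o(1)}/k$ from \cref{lem:expected-precision}) are identical to the paper's. Your extra remarks on boosting to high probability, the $\floor{t_v/2}=0$ edge case, and the one-sidedness of the black-box approximation are sound bookkeeping details that the paper leaves implicit.
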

\begin{proof}
We will use the result that an $n^{\order(1)}$-approximation for the edit distance of two length-$n$ strings can be computed in time $n^{1+o(1)}$~\cite{AndoniKO10,AndoniO12,AndoniN20}. 
In the preprocessing phase, we compute $n^{\order(1)}$-factor approximations \raisebox{0pt}[0pt][0pt]{$\widetilde{\ED}(Y_{v,\tilde s}, Y_{v,s'})$} for all nodes $v$, all~\makebox{$\tilde s \in S_v$} and all~\makebox{$s' \in S$}. Then, to answer a query $\ShiftedDistance(Y, v, s, s')$ we let
\begin{equation*}
    \tilde s := \argmin_{\tilde{s} \in S_v} |s-\tilde{s}|    
\end{equation*}
and output $\widetilde{\ED}(Y_{v, \tilde s}, Y_{v,s'})$.

First we argue that this gives a good approximation. Indeed, we have that $|s - \tilde s| \leq t_v / 4$ by the definition of $S_v$. Therefore:
\begin{align*}
    \widetilde{\ED}(Y_{v,\tilde s}, Y_{v,s'}) 
    &\leq n^{\order(1)} \cdot \ED(Y_{v, \tilde s}, Y_{v,s'}) \\
    &\leq n^{\order(1)} \cdot (\ED(Y_{v, s}, Y_{v,s'}) + \ED(Y_{v, s}, Y_{v, \tilde s}))  \\
    &\leq n^{\order(1)} \cdot (\ED(Y_{v,s}, Y_{v,s'}) + t_v/2),
\end{align*}
where second inequality is an application of the triangle inequality, and the last inequality follows since we can transform $Y_{v, \tilde s}$ into $Y_{v,s}$ by adding and removing $t_v/4$ symbols. A symmetric argument shows the claimed lower bound \raisebox{0pt}[0pt][0pt]{$\widetilde{\ED}(Y_{v,\tilde s}, Y_{v,s'}) \geq \ED(Y_{v,s}, Y_{v,s'}) - t_v / 2$}.

Next we analyze the running time. For the preprocessing, we compute~$|S_v| \cdot |S| = \Order^*(k^2/t_v)$ many approximations for each node $v$, each in time $|Y_v|^{1+\order(1)}$. We can bound~$\Ex(1/t_v) = n^{\order(1)} / k$ by \cref{lem:expected-precision}, so the expected total time is $\sum_v |Y_v|^{1+\order(1)} n^{\order(1)} \cdot k = k n^{1+\order(1)}$. Answering a query takes constant time since we only need to compute $s^*$ as stated above and perform a constant-time lookup.
\end{proof}

\subsection{Faster Shifted-Distance Queries} \label{sec:proofs:sec:fast-preprocessing}
In this section we show how to improve the preprocessing time for \ShiftedDistance{} queries to $n^{1 + o(1)}$. We thank the anonymous reviewer who suggested this improvement. The key technical tool to obtain this improvement is the following result by Andoni and Onak~\cite{AndoniO12}.

\begin{theorem}[Embedding Substrings into $\ell_1$~\cite{AndoniO12}]\label{thm:embedding-ao}
Let $X$ be a string of length $n$. Then, for each integer $m$ of the form $m = \lfloor n/B^i \rfloor$ for $0 \leq i \leq \log_B(n)$ and $B = 2^{\Theta(\sqrt{\log n \log \log n})}$, we can embed all length-$m$ substrings $X_0,\dots,X_{n-m}$ of $X$ into vectors $v_0^m,\dots,v_{n-m}^m$ of dimension $n^{o(1)}$ such that for every $j, j' \in \rangezero{n-m+1}$, with high probability it holds that
\begin{equation*}
    \ED(X_j, X_{j'}) \leq \|v_j^m - v_{j'}^m \|_1 \leq n^{o(1)} \cdot \ED(X_j, X_{j'}).
\end{equation*}
The time to compute all vectors is $n^{1 + o(1)}$.
\end{theorem}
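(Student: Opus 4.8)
The plan is to reconstruct the recursive Ostrovsky--Rabani-style embedding underlying the Andoni--Onak algorithm and to establish the two nontrivial points separately: that it has distortion and dimension $n^{o(1)}$, and that all vectors across all levels can be built in total time $n^{1+o(1)}$. I would define the vectors level by level. For the coarsest level of interest, $m = \lfloor n/B^i\rfloor$ with $i = \log_B(n)$, we have $m = 1$ and each length-one substring $X \access j$ is sent to (a low-dimensional sketch of) the indicator vector of $X \access j \in \Sigma$, so that $\ell_1$ distance equals Hamming distance, which here is $0$ or $1$. For the inductive step from level $m/B$ to level $m$: split a length-$m$ window into its $B$ consecutive child windows of length $\lfloor m/B\rfloor$ and define the level-$m$ vector of the window as the concatenation, over each child index $b$ and over each shift $\sigma$ in a bounded set (with $|\sigma|$ at most the current scale), of a reweighted copy of the level-$(m/B)$ vector of the $b$-th child window displaced by $\sigma$, together with an extra coordinate block recording $|\sigma|$ so that large displacements are charged. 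Crucially, every displaced child window is itself a length-$\lfloor m/B\rfloor$ substring of $X$ and hence already has a level-$(m/B)$ vector; this self-reducibility is what makes the construction computable. The $\ell_1$ distance between two such level-$m$ vectors is, by construction, a weighted sum over children and shifts of level-$(m/B)$ distances plus shift penalties --- exactly the kind of quantity governed by the ``$\mathcal E$-distance'' of \cref{lem:divide-and-conquer}.

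For correctness one bounds $\|v_j^m - v_{j'}^m\|_1$ against $\ED(X_j, X_{j'})$ from both sides by induction on the level. The lower bound $\|v_j^m - v_{j'}^m\|_1 \ge \ED(X_j, X_{j'})$ follows because any configuration of child distances and shift penalties that the $\ell_1$ sum charges can be turned into an alignment of $X_j, X_{j'}$ of no larger cost, using the first bullet of \cref{lem:divide-and-conquer} together with the inductive lower bounds on the children. The upper bound uses the second bullet of \cref{lem:divide-and-conquer}: there exist shifts $\sigma_1,\dots,\sigma_B$ along which the children's edit distances sum to at most $2\ED(X_j, X_{j'})$ with each $|\sigma_b| \le \ED(X_j, X_{j'})$; invoking the inductive upper bounds on the children along these shifts yields $\|v_j^m - v_{j'}^m\|_1 \le n^{o(1)}\cdot \ED(X_j, X_{j'})$, where the factor grows by $O(B)$ per level from the at-most-$B$-fold branching and by a further bounded factor from the ``spurious'' non-optimal shift terms; over $\log_B(n)$ levels this accumulates to $B \cdot 2^{O(\log_B n)} = 2^{O(\sqrt{\log n\log\log n})} = n^{o(1)}$, using $B = 2^{\Theta(\sqrt{\log n\log\log n})}$. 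The decaying weighting across shifts is precisely what makes both directions hold simultaneously. A naive realization, however, multiplies the dimension by $B$ (number of children) times the number of shifts at each level, which over all levels compounds to roughly $n^{1+o(1)}$ --- too large. This is where Precision Sampling enters: at each level one replaces the exact $B$-fold concatenation by a precision-sampled subset of its coordinates, keeping the per-level dimension $n^{o(1)}$ while estimating all relevant $\ell_1$ distances up to a further $n^{o(1)}$ multiplicative factor; over all levels the dimension stays $n^{o(1)}$ and the extra distortion is absorbed into the overall $n^{o(1)}$.

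The main obstacle is the running time. Once the dimension is pinned at $n^{o(1)}$ at every level, assembling level $m$ from level $m/B$ is cheap: there are $\Theta(n)$ windows, consecutive windows share all but $O(1)$ of their child windows, and the displaced child vectors they need are already stored, so writing out all level-$m$ vectors costs $\Theta(n)\cdot n^{o(1)} = n^{1+o(1)}$. The delicate part is performing, at each level, the precision-sampling step so that it simultaneously (i) cuts the dimension back down to $n^{o(1)}$, cancelling the would-be factor-$B$ blowup, (ii) preserves all $O(n)$ pairwise $\ell_1$ distances of that level up to the budgeted factor, and (iii) runs in time $n^{1+o(1)}$ --- this is the technical heart inherited from Andoni and Onak, and it exploits that the embedding is non-oblivious, tailored to $X$ with only $O(n)$ points per level. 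Summing the $n^{1+o(1)}$ work over the $\log_B(n)$ levels keeps the total at $n^{1+o(1)}$. Finally, since each sampling and sketching guarantee holds with probability $1 - 1/\poly(n)$, a union bound over the $O(n\log_B n)$ vectors involved gives the high-probability statement of the theorem.
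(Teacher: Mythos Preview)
The paper does not prove this theorem. It is stated as a black-box result imported from Andoni and Onak~\cite{AndoniO12} (note the citation already in the theorem header), and is then immediately applied in the proof of \cref{lem:shifted-distance-queries} without any further argument. So there is no ``paper's own proof'' to compare your proposal against.

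Your sketch is a reasonable high-level description of the ideas behind the Andoni--Onak construction (recursive block decomposition with shifts in the spirit of the $\mathcal E$-distance, combined with precision sampling to control dimension), and it captures the right parameter regime that makes the distortion and dimension come out to $n^{o(1)}$. But this reconstruction goes well beyond what the present paper does or needs; for the purposes of this paper the theorem is simply quoted, and a correct ``proof'' here is a citation.
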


The main application of this embedding in~\cite{AndoniO12} is an $n^{o(1)}$-approximation for the edit distance of two length-$n$ strings in time $n^{1+o(1)}$ (in~\cite{AndoniO12}, \cref{thm:embedding-ao} is applied to the concatenation of two strings to compute the $\ell_1$-distance between their corresponding vectors). We remark that the guarantee of the embedding in~\cref{thm:embedding-ao} is \emph{non-oblivious}, in the sense that the algorithm needs to have access to all the substrings it is embedding. In particular, this means that it cannot be directly applied in the two-sided preprocessing setting where we would like to embed the strings separately.

\lemshifteddistancequeries*
\begin{proof}
We assume without loss of generality that $n = |Y| = |X|$ is a power of $B$ (we can pad both $X$ and $Y$ so that this holds, which has no impact on the running time or the approximation guarantee of our algorithms). We apply~\cref{thm:embedding-ao} to the string $Y$ and store all the $n^{1+o(1)}$ embedded vectors. To answer a query $\ShiftedDistance(Y, v, s, s')$, note that $Y_{v,s}$ and $Y_{v,s'}$ are substrings of $Y$ of length $m := n/B^{\depth(v)}$. Thus, the $\ell_1$ distance between their corresponding vectors $v_{j(v,s)}^m, v_{j(v,s')}^m$ given by~\cref{thm:embedding-ao} gives the desired approximation (with no additive error). Since these vectors have dimension $n^{o(1)}$, the time to compute $\|v_{j(v,s)}^m - v_{j(v,s')}^m\|_1$ and answer the query is $n^{o(1)}$, as desired.
\end{proof}

\subsection{Pruning Rule for Preprocessed Strings} \label{sec:overview:sec:pruning-rule}
In this section we analyze \cref{alg:pruning-rule}. We always assume that either only $Y$ or both $X$ and $Y$ have been preprocessed by \cref{lem:matching-queries,lem:shifted-distance-queries} to efficiently answer \Matching{} and \ShiftedDistance{} queries.

\begin{algorithm}[t]
\caption{} \label{alg:pruning-rule}
\begin{algorithmic}[1]
\Input{Strings $X$ (un- or preprocessed) and $Y$ (preprocessed), a node $v$ in the precision tree}
\Output{An approximation $\widetilde\ED(X_v, Y_{v, s})$ of $\ED(X_v, Y_{v, s})$ for all $s \in S_v$, or \textsc{Fail}}
\medskip
\If{$\Matching(X, Y, v) = \Close[s^*]$} \label{alg:pruning-rule:line:matching-query}
    \State\Return $\widetilde\ED(X_v, Y_{v, s}) = \ShiftedDistance(Y, v, s^*, s)$ for all $s \in S_v$ \label{alg:pruning-rule:line:distance-query}
\Else
    \State\Return \textsc{Fail} \label{alg:pruning-rule:line:fail}
\EndIf
\end{algorithmic}
\end{algorithm}

\begin{lemma}[Correctness of \cref{alg:pruning-rule}] \label{lem:pruning-rule-correctness}
Whenever \cref{alg:pruning-rule} does not return \textsc{Fail}, it returns approximations~\raisebox{0pt}[0pt][0pt]{$\widetilde\ED(X_v, Y_{v, s})$} satisfying with high probability
\begin{equation*}
    \ED(X_v, Y_{v, s}) - t_v \leq \widetilde\ED(X_v, Y_{v, s}) \leq n^{\order(1)} \cdot (\ED(X_v, Y_{v, s}) + t_v).
\end{equation*}
\end{lemma}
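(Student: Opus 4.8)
The plan is to trace through the two lines of \cref{alg:pruning-rule} that can produce output and verify the displayed inequality using the guarantees of the \Matching{} and \ShiftedDistance{} oracles together with the triangle inequality. Since the algorithm only avoids \textsc{Fail} when the \Matching{} query returns $\Close[s^*]$, I may assume throughout that such an $s^*$ was returned, so that by \cref{def:matching-queries} we have $s^* \in S$ and $\HD(X_v, Y_{v, s^*}) \leq t_v/2$. In particular $\ED(X_v, Y_{v, s^*}) \leq \HD(X_v, Y_{v, s^*}) \leq t_v / 2$, which will be the quantitative bridge between $X_v$ and the shifted copies of $Y$ that the algorithm actually compares.

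First I would fix an arbitrary $s \in S_v$ and recall that the returned value is $\widetilde\ED(X_v, Y_{v,s}) = \ShiftedDistance(Y, v, s^*, s)$. By \cref{def:shifted-distance-queries}, this quantity satisfies, with high probability,
\begin{equation*}
\ED(Y_{v, s^*}, Y_{v, s}) - t_v/2 \;\leq\; \widetilde\ED(X_v, Y_{v,s}) \;\leq\; n^{\order(1)} \cdot \bigl(\ED(Y_{v, s^*}, Y_{v, s}) + t_v/2\bigr).
\end{equation*}
For the upper bound I would apply the triangle inequality $\ED(Y_{v,s^*}, Y_{v,s}) \leq \ED(X_v, Y_{v,s^*}) + \ED(X_v, Y_{v,s}) \leq t_v/2 + \ED(X_v, Y_{v,s})$ and substitute, obtaining $\widetilde\ED(X_v, Y_{v,s}) \leq n^{\order(1)} \cdot (\ED(X_v, Y_{v,s}) + t_v)$, as desired. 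For the lower bound I would use the reverse triangle inequality $\ED(Y_{v,s^*}, Y_{v,s}) \geq \ED(X_v, Y_{v,s}) - \ED(X_v, Y_{v,s^*}) \geq \ED(X_v, Y_{v,s}) - t_v/2$, so that $\widetilde\ED(X_v, Y_{v,s}) \geq \ED(X_v, Y_{v,s}) - t_v/2 - t_v/2 = \ED(X_v, Y_{v,s}) - t_v$. This matches the claimed bound exactly.

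Finally I would address the probabilistic bookkeeping: the statement says ``with high probability,'' and both the \Matching{} guarantee (\cref{lem:matching-queries}) and the per-query \ShiftedDistance{} guarantee (\cref{def:shifted-distance-queries}, realized by \cref{lem:shifted-distance-queries}) hold only with high probability. Since the node $v$ is fixed and at most $|S_v| = k \cdot n^{\order(1)}/t_v \le kn^{\order(1)} = \poly(n)$ shifted-distance queries are invoked in line~\ref{alg:pruning-rule:line:distance-query}, a union bound over the (polynomially many) invoked oracle calls keeps the total failure probability at $1/\poly(n)$, which I can absorb into the "high probability" claim by raising the polynomial degree in the oracle guarantees appropriately. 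I do not expect any real obstacle here — the entire argument is two applications of the triangle inequality plus a union bound; the only mild care needed is to remember that $\ED \le \HD$ for equal-length strings, which is what converts the Hamming guarantee of \Matching{} into an edit-distance bound, and to note that $X_v$ and $Y_{v,s^*}$ indeed have the same length (both are length-$|X_v|$ substrings up to the out-of-bounds convention, and the $\Close$ case certifies $\HD(X_v, Y_{v,s^*})$ is well-defined).
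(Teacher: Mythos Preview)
Your proof is correct and follows essentially the same approach as the paper: use the \Matching{} guarantee to get $\ED(X_v,Y_{v,s^*})\le t_v/2$, invoke the \ShiftedDistance{} guarantee, and apply the triangle inequality in both directions. The paper's proof is terser (it omits the $\ED\le\HD$ remark and the explicit union bound), but the argument is identical.
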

\begin{proof}
\cref{alg:pruning-rule} only succeeds if the \Matching{} query in \cref{alg:pruning-rule:line:matching-query} successfully identified some shift $s^*$ with $\HD(X_v, Y_{v, s^*}) \leq t_v / 2$ by \cref{lem:matching-queries}. In this case, we report $\widetilde\ED(X_v, Y_{v, s}) := \ShiftedDistance(Y, v, s^*, s)$. By \cref{lem:shifted-distance-queries}, this is an approximation of $\ED(Y_{v, s^*}, Y_{v, s})$ with additive error $t_v / 2$ and multiplicative error $n^{\order(1)}$. Combining both facts, and using the triangle inequality we obtain that
\begin{align*}
    \widetilde\ED(Y_{v, s^*}, Y_{v, s})
    &\leq n^{\order(1)} \cdot (\ED(Y_{v, s^*}, Y_{v, s}) + t_v / 2) \\
    &\leq n^{\order(1)} \cdot (\ED(X_v, Y_{v, s}) + \ED(X_v, Y_{v, s^*}) + t_v / 2) \\
    &\leq n^{\order(1)} \cdot (\ED(X_v, Y_{v, s}) + t_v),
\end{align*}
and
\begin{align*}
    \widetilde\ED(Y_{v, s^*}, Y_{v, s})
    &\geq \ED(Y_{v, s^*}, Y_{v, s}) - t_v / 2 \\
    &\geq \ED(X_v, Y_{v, s}) - \ED(X_v, Y_{v, s^*}) - t_v / 2 \\
    &\geq \ED(X_v, Y_{v, s}) - t_v. \qedhere
\end{align*}
\end{proof}

\begin{lemma}[Running Time of \cref{alg:pruning-rule}] \label{lem:pruning-rule-time}
If only the string $Y$ is preprocessed, then \cref{alg:pruning-rule} runs in expected time $\Order^*(|X_v|/t_v + k/t_v)$. If both strings $X, Y$ are preprocessed, then \cref{alg:pruning-rule} runs in expected time $\Order^*(k / t_v)$.
\end{lemma}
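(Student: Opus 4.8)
The plan is to simply trace through the two steps of \cref{alg:pruning-rule} and sum up the query costs supplied by \cref{lem:matching-queries,lem:shifted-distance-queries}. In both the one-sided and two-sided settings the algorithm performs exactly one \Matching{} query (\cref{alg:pruning-rule:line:matching-query}) and then, only if it returns $\Close[s^*]$, exactly $|S_v|$ many \ShiftedDistance{} queries (\cref{alg:pruning-rule:line:distance-query}), one for each shift $s \in S_v$; if it returns $\Far$ the algorithm stops immediately, so the worst case is when the \Matching{} query succeeds. Thus the total running time is the cost of one \Matching{} query plus $|S_v|$ times the cost of one \ShiftedDistance{} query.

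First I would bound the number of shifts. By definition $S_v = S \cap \floor{t_v/2}\Int$ with $|S| = k \cdot 3^{\log_B(n)} = k \cdot n^{\order(1)}$, so $|S_v| \le \Order(|S| / t_v) = \Order^*(k / t_v)$ (this bound holds deterministically once $t_v$ is fixed; there is no expectation needed here). Next I would plug in the per-query costs. By \cref{lem:shifted-distance-queries}, each \ShiftedDistance{} query costs $n^{o(1)}$ with high probability, so the total cost of \cref{alg:pruning-rule:line:distance-query} is $|S_v| \cdot n^{o(1)} = \Order^*(k/t_v)$. For the \Matching{} query in \cref{alg:pruning-rule:line:matching-query}: in the one-sided case \cref{lem:matching-queries} gives expected (indeed w.h.p.) time $\widetilde\Order(|X_v|/t_v)$; in the two-sided case it gives $\Order(1)$. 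Adding the two contributions yields $\Order^*(|X_v|/t_v + k/t_v)$ in the one-sided case and $\Order^*(k/t_v)$ in the two-sided case, as claimed. The ``expected'' qualifier in the statement is there to match the preprocessing/query guarantees of \cref{lem:matching-queries}; alternatively one could note that all bounds also hold with high probability via Chernoff.

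There is no real obstacle here — the lemma is essentially bookkeeping. The only point requiring a word of care is that the running time is stated with $t_v$ fixed (i.e.\ we do \emph{not} yet take expectation over the random tolerances of the precision tree); the expectation over $t_v$, which turns $k/t_v$ into $n^{\order(1)}$ via \cref{lem:expected-precision}, will be carried out later when the costs over all recursively visited nodes are summed in the analysis of \cref{alg:main}. So in this lemma the bounds should be read as functions of the realized tolerance $t_v$, and the word ``expected'' refers solely to the internal randomness of the data-structure queries, not to the randomness of the precision tree.
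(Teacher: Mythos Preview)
Your proposal is correct and follows essentially the same approach as the paper's proof: bound the cost of the single \Matching{} query via \cref{lem:matching-queries}, bound the number of \ShiftedDistance{} queries by $|S_v| = \Order^*(k/t_v)$ and their individual cost by $n^{o(1)}$ via \cref{lem:shifted-distance-queries}, and add. Your additional remarks about $t_v$ being fixed and the meaning of ``expected'' are accurate clarifications that the paper leaves implicit.
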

\begin{proof}
The expected running time of the \Matching{} query is bounded by $\widetilde\Order(|X_v| / t_v)$ (for one-sided preprocessing) or by $\Order(1)$ (for two-sided preprocessing). The running time of a single \ShiftedDistance{} query is bounded by $n^{o(1)}$, and we make~$|S_v| = \widetilde\Order(k / t_v)$ \ShiftedDistance{} queries. Hence, the total expected time is~\makebox{$\Order^*(|X_v| / t_v + k / t_v)$} (for one-sided preprocessing) or $\Order^*(k / t_v)$ (for two-sided preprocessing).
\end{proof}

\begin{lemma}[Efficiency of \cref{alg:pruning-rule}] \label{lem:pruning-rule-efficiency}
For any two strings $X, Y$, there are at most~$\Order(k \log n)$ many nodes $v$ in the precision tree for which \cref{alg:pruning-rule} fails, assuming that $\ED(X, Y) \leq k$.
\end{lemma}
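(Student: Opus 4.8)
The plan is to bound the number of ``bad'' nodes $v$ (those for which \cref{alg:pruning-rule} fails) level by level in the precision tree $T$, and then sum over all $O(\log_B n)$ levels. First I would recall that \cref{alg:pruning-rule} fails precisely when $\Matching(X, Y, v)$ returns $\Far$, and by \cref{def:matching-queries} this can only happen when there exists no shift $s^* \in S$ with $X_v = Y_{v, s^*}$. So it suffices to show that on each level $\ell$ of $T$, the number of nodes $v$ at depth $\ell$ for which no shift $s^* \in S$ satisfies $X_v = Y_{v, s^*}$ is at most $O(k)$; multiplying by $\height(T) = O(\log_B n) \cdot \log B$ (in fact the number of levels is $O(\log_B n)$, but the crude bound $O(k \log n)$ absorbs the $\log B$ factor anyway) then gives the claim. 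Actually, since there are $O(\log_B n)$ levels and each contributes $O(k)$ bad nodes, one gets $O(k \log_B n) = O(k \log n)$.

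The core step is the per-level argument. Fix a level $\ell$ and let $v_1, \dots, v_m$ (with $m \leq B^\ell$) be the nodes at depth $\ell$, so that $X_{v_1}, \dots, X_{v_m}$ is the partition of $X$ into consecutive substrings induced by level $\ell$. Since $\ED(X, Y) \leq k$, fix an optimal alignment $A$ between $X$ and $Y$. By \cref{lem:divide-and-conquer} (applied with the breakpoints $j_i$ being the boundaries of the level-$\ell$ substrings, and with the shifts $s_i$ guaranteed by the second bullet of that lemma), there exist shifts $s_1, \dots, s_m$ with $\sum_i \ED(X_{v_i}, Y_{v_i, s_i}) \leq 2\ED(X, Y) \leq 2k$ and $2|s_i| \leq \ED(X, Y) \leq k$ for all $i$. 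Hence $|s_i| \leq k/2$, and since $S = \{-k \cdot 3^{\log_B n}, \dots, k \cdot 3^{\log_B n}\}$ certainly contains all integers of absolute value at most $k/2$, each $s_i$ lies in $S$. Now, because $\ED(X_{v_i}, Y_{v_i, s_i}) \geq 1$ whenever $X_{v_i} \neq Y_{v_i, s_i}$, the bound $\sum_i \ED(X_{v_i}, Y_{v_i, s_i}) \leq 2k$ forces $X_{v_i} = Y_{v_i, s_i}$ for all but at most $2k$ indices $i$. For every such ``good'' index $i$ we have a witness shift $s_i \in S$ with $X_{v_i} = Y_{v_i, s_i}$, so the \Matching{} query at $v_i$ cannot return $\Far$, and \cref{alg:pruning-rule} does not fail at $v_i$. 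Therefore at most $2k$ nodes fail on level $\ell$.

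Summing over the $O(\log_B n)$ levels of $T$ gives at most $O(k \log_B n) = O(k \log n)$ failing nodes in total, as desired. The main obstacle---really the only subtle point---is making sure the shifts $s_i$ furnished by \cref{lem:divide-and-conquer} actually land in the set $S$ of feasible shifts; this is where the bound $2|s_i| \leq \ED(X, Y) \leq k$ from the second bullet of \cref{lem:divide-and-conquer} is essential, together with the (generous) definition of $S$. One should also double-check that the partition used at each level is a legitimate choice of breakpoints $0 = j_0 < \dots < j_B = n$ for \cref{lem:divide-and-conquer}; this is immediate from the definition of $X_v$ via the balanced $B$-ary tree $T$, since the level-$\ell$ substrings partition $X$ into consecutive blocks. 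Everything else is a direct counting argument.
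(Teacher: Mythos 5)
Your proof is correct and takes essentially the same approach as the paper: fix a level, partition $X$ into consecutive blocks, find shifts under which most blocks of $X$ exactly match the corresponding shifted blocks of $Y$, and count the exceptions. The only cosmetic difference is that you invoke the second bullet of \cref{lem:divide-and-conquer} (getting a per-level bound of $2k$ failing nodes and the shift bound $2|s_i|\le k$), whereas the paper works directly with the optimal alignment to get the sharper per-level bound of $k$ and $|s^*|\le k$; both are absorbed by the $\Order(\cdot)$ and both crucially use the same witness shifts coming from the alignment, so the arguments are essentially identical.
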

\begin{proof}
It suffices to argue that there are at most $ \Order(\ED(X, Y) \log n)$ nodes for which the \Matching{} query in \cref{alg:pruning-rule:line:matching-query} fails. We start with a fixed level in the precision tree, and enumerate all nodes on that level as~$v_1, \dots, v_m$. By definition we have that $X = \bigcirc_{i=1}^m X_{v_i}$, hence there exist indices~$0 = j_0 < \dots < j_m = n$ such that $X_{v_i} = X \range{j_i}{j_{i+1}}$. Now consider an optimal alignment $A : \set{0, \dots, n} \to \set{0, \dots, n}$ between $X$ and~$Y$. In particular, $A$ satisfies
\begin{equation*}
    \ED(X, Y) = \sum_{i=1}^m \ED(X \range{j_i}{j_{i+1}}, Y \range{A(j_i)}{A(j_{i+1})}).
\end{equation*}
There can be at most $\ED(X, Y) \leq k$ many nonzero terms in the sum, and we claim that every zero term corresponds to a node $v_i$ for which \cref{alg:pruning-rule} succeeds. Indeed, for any zero term we have $X \range{j_i}{j_{i+1}} = Y \range{A(j_i)}{A(j_{i+1})}$ and therefore $X_{v_i} = Y_{v_i, s^*}$ where $s^* = A(j_i) - j_i$. It remains to argue that $|s^*| \leq k$ (since otherwise the index $s^*$ would be out-of-bounds and could not be detected by a \Matching{} query). To see this, observe that
\begin{equation*}
    |s^*| \leq \ED(X \range{j_0}{j_i}, Y \range{A(j_0)}{A(j_i)}) \leq \ED(X, Y) \leq k,
\end{equation*}
exploiting again that $A$ is an optimal alignment.

Finally, recall that there are only $\log_B(n) \leq \log n$ levels in the precision tree, hence the total number of nodes for which \cref{alg:pruning-rule} fails is bounded by $\Order(k \log n)$.
\end{proof}

\subsection{The Complete Algorithm} \label{sec:overview:sec:main}
Our complete \cref{alg:main} is essentially what we described in \cref{sec:overview:sec:main} with the additional pruning rule of applying \cref{alg:pruning-rule}. We start with the correctness of \cref{alg:main}. We need the following lemma, which has previously been referred to as a \emph{Precision Sampling Lemma}~\cite{AndoniKO10,AndoniKO11,Andoni17,BringmannCFN22}. The lemma was introduced by Andoni, Krauthgamer and Onak~\cite{AndoniKO10}, and was refined in~\cite{AndoniKO11,Andoni17}.

Intuitively, the lemma serves the following purpose: For fixed numbers $a_1, \dots, a_B$, say that we have access to approximations $\widetilde a_1, \dots, \widetilde a_B$ with multiplicative error $\alpha$ and additive approximation error $\beta$. Then we can naively approximate $\sum_i a_i$ by $\sum_i \widetilde a_i$ with multiplicative error $\alpha$ and additive error $B \cdot \beta$. The Precision Sampling Lemma states that the blow-up in the additive error can be avoided if the approximations $\widetilde a_i$ instead have additive error $\beta \cdot u_i$ for some \emph{non-uniformly sampled precisions $u_i$.}   

\begin{algorithm}[t]
\caption{} \label{alg:main}
\begin{algorithmic}[1]
\Input{Strings $X$ (un- or preprocessed), $Y$ (preprocessed), a node $v$ in the precision tree}
\Output{An approximation $\widetilde\ED(X_v, Y_{v, s})$ of $\ED(X_v, Y_{v, s})$ for all $s \in S_v$}
\medskip
\If{\cref{alg:pruning-rule} succeeds and reports $\widetilde\ED(X_v, Y_{v, s})$ for all $s \in S_v$} \label{alg:main:line:prune-condition}
    \State\Return $\widetilde\ED(X_v, Y_{v, s})$ for all $s \in S_v$ \label{alg:main:line:prune}
\EndIf
\medskip
\If{$v$ is a leaf} \label{alg:main:line:leaf-condition}
    \State\Return $\ED(X_v, Y_{v, s})$ for all $s \in S_v$ \label{alg:main:line:leaf}
\EndIf
\For{all children $v_i$ of $v$} \label{alg:main:line:iter-children}
    \State Recursively compute $\widetilde\ED(X_{v_i}, Y_{v_i, s_i})$ for all $s_i \in S_{v_i}$ \label{alg:main:line:recursion}
    \State Compute $\widetilde a_{i, s} = \min_{s_i \in S_{v_i}} \widetilde\ED(X_{v_i}, Y_{v_i, s_i}) + 2 \cdot |s - s_i|$ for all $s \in S_v$ \label{alg:main:line:range-minimum}
\EndFor
\State\Return $\alg{Recover}(\widetilde a_{1, s}, \dots, \widetilde a_{B, s}, u_{v_1}, \dots, u_{v_B})$ for all $s \in S_v$ \label{alg:main:line:return}
\end{algorithmic}
\end{algorithm}

\begin{restatable}[Precision Sampling Lemma~{{{\cite{Andoni17}}}}]{lemma}{lempsl} \label{lem:psl}
Fix parameters $\delta > 0$, $\alpha \geq 1$ and $\beta \geq 0$. Let~\makebox{$a_1, \dots, a_B \geq 0$} be reals, and independently sample~$u_1, \dots, u_B \sim \Exp(\Order(\log(\delta^{-1})))$ (for some sufficiently large hidden constant). There is an $\Order(B \log(\delta^{-1}))$-time algorithm \alg{Recover} satisfying for all $\widetilde a_1, \dots, \widetilde a_B$, with success probability at least $1 - \delta$:
\begin{itemize}
\item If $\widetilde a_i \geq \frac1\alpha \cdot a_i - \beta \cdot u_i$ for all~$i$, then $\alg{Recover}(\widetilde a_1, \dots, \widetilde a_B, u_1, \dots, u_B) \geq \frac{1}{2\alpha} \cdot \sum_i a_i - \beta$.
\item If $\widetilde a_i \leq \alpha \cdot a_i + \beta \cdot u_i$ for all~$i$, then $\alg{Recover}(\widetilde a_1, \dots, \widetilde a_B, u_1, \dots, u_B) \leq 2 \alpha \sum_i a_i + \beta$.
\end{itemize}
\end{restatable}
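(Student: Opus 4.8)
The plan is to define the algorithm \alg{Recover} in the natural way suggested by the structure of the error bounds: given the inputs $\widetilde a_1, \dots, \widetilde a_B$ and the sampled precisions $u_1, \dots, u_B$, output something like $\max_i \bigl(\widetilde a_i \cdot w_i\bigr)$ or a suitably weighted sum, where the weights are chosen as a deterministic function of the $u_i$ (e.g.\ $w_i \propto 1/u_i$ after clipping to a polynomial range). The key intuition to exploit is the heavy-tailed behavior of the exponential distribution: with the parameter set to $\Order(\log(\delta^{-1}))$, each $u_i$ lies in a bounded window $[\delta^{\Order(1)}, \Order(\log(\delta^{-1}))]$ except with probability $\poly(\delta)$, so a union bound over the $B$ indices keeps the total failure probability below $\delta$ (after absorbing $B$ into the hidden constant in the parameter). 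I would first record this concentration fact as the single probabilistic ingredient, and make the rest of the argument deterministic conditioned on it.

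First I would establish the upper-bound direction: assuming $\widetilde a_i \leq \alpha a_i + \beta u_i$ for all $i$, I need $\alg{Recover}(\cdots) \leq 2\alpha \sum_i a_i + \beta$. Here the weighted combination should be designed so that the $\beta u_i$ terms, after multiplication by the weights $w_i \approx 1/u_i$, contribute a bounded total (at most $\beta$, which is where the clipping and the choice of normalization constant enter), while the $\alpha a_i w_i$ terms sum to at most $2\alpha \sum_i a_i$; the factor $2$ is slack to absorb the clipping error on the small-$u_i$ side. Then I would handle the lower-bound direction: assuming $\widetilde a_i \geq \frac1\alpha a_i - \beta u_i$, I want to show the recovered value is at least $\frac{1}{2\alpha}\sum_i a_i - \beta$. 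The idea is that for the index $i$ achieving a good fraction of $\sum_j a_j$ (say $a_i \geq \frac{1}{B}\sum_j a_j$, but actually we want something better — one uses that with constant probability some $u_i$ is small enough that $a_i / u_i$ is comparable to $\sum_j a_j$, this is the classic "precision sampling" trick), the corresponding term $\widetilde a_i w_i$ is large enough, while the negative contribution $-\beta u_i w_i$ is bounded by $\beta$. The running time bound $\Order(B\log(\delta^{-1}))$ is immediate since \alg{Recover} just computes $B$ products/clippings and a max or sum.

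The main obstacle I anticipate is getting the constants and the exact form of the weights to line up simultaneously in \emph{both} directions: the upper bound wants the weights small (so the $\beta u_i$ noise does not blow up) and the sum of $a_i w_i$ controlled from above, while the lower bound wants at least one weight large enough to certify a $\Omega(1/\alpha)$ fraction of $\sum a_i$. Reconciling these forces the precise choice of the exponential parameter and the clipping thresholds, and the proof that with probability $\geq 1-\delta$ there exists an index $i$ with $u_i$ simultaneously (a) not too large for \emph{every} $j$ and (b) small enough relative to $a_i / \sum_j a_j$. Since this is exactly the lemma proved in~\cite{AndoniKO10,AndoniKO11,Andoni17}, I would follow the refined argument of~\cite{Andoni17}: fix the weights as $w_i = \min(1/u_i, \poly(\delta^{-1}))$ normalized appropriately, condition on the event that all $u_i$ are polynomially bounded above and below, and then verify the two inequalities by elementary manipulation, citing the tail computation for the existence claim in the lower-bound direction. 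I would present this as a clean restatement rather than reproving every constant, since the statement is quoted verbatim from prior work.
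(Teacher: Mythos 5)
The paper does not prove this lemma: it is imported verbatim from prior work with the citation \cite{Andoni17} (and originally from \cite{AndoniKO10,AndoniKO11}), so there is no ``paper's own proof'' for your proposal to be compared against. Your plan to ultimately defer to \cite{Andoni17} rather than re-derive the constants is therefore consistent with what the paper itself does.

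That said, if you intend your sketch to be more than a pointer to the reference, the proposed form of $\alg{Recover}$ is off. You suggest either a max of $\widetilde a_i w_i$ with $w_i \propto 1/u_i$ or a ``suitably weighted sum'' with the same weights. The weighted-sum variant cannot work: conditioned on all $u_i$ lying in a benign range, each noise term $\beta u_i$ multiplied by $w_i \approx 1/u_i$ contributes $\Theta(\beta)$, so a sum over $i$ accumulates $\Theta(B\beta)$ additive error, which is exactly the blow-up the lemma is designed to avoid. The max variant is closer in spirit, since $\max_i a_i/u_i$ for $u_i \sim \Exp(1)$ has the same law as $(\sum_i a_i)/u$ for $u \sim \Exp(1)$, but a single max only gives constant-probability success and constant-factor distortion; the actual recovery procedure in \cite{AndoniKO10,Andoni17} is a threshold/counting (order-statistics) estimator that looks at how many indices satisfy $\widetilde a_i / u_i \geq \tau$ and exploits the Poisson-like concentration of this count, with the parameter $\lambda = \Theta(\log\delta^{-1})$ on the exponential providing the boost from constant to $1-\delta$ success probability. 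Your lower-bound intuition (``some $u_i$ is small enough that $a_i/u_i$ is comparable to $\sum_j a_j$'') is correct, but the route from there to $1-\delta$ confidence is this count statistic, not a clipping of the weights, and the upper-bound direction hinges on the same threshold test bounding the number of false positives rather than on a bounded weighted sum. So as a stand-alone proof plan it has a genuine gap; as a paraphrase-and-cite of \cite{Andoni17}, which is what both you and the paper ultimately do, it is fine.
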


\begin{lemma}[Correctness of \cref{alg:main}] \label{lem:main-correctness}
\cref{alg:main} computes values $\widetilde\ED(X_v, Y_{v, s})$ (for all $s \in S_v$) satisfying the following bounds with high probability:
\begin{itemize}
\item $\widetilde\ED(X_v, Y_{v, s}) \geq n^{-\order(1)} \cdot \ED(X_v, Y_{v, s}) - t_v$, and
\item $\widetilde\ED(X_v, Y_{v, s}) \leq n^{\order(1)} \cdot (\ED(X_v, Y_{v, s}) + t_v)$ assuming that $\ED(X_v, Y_{v, s}) \leq k$ and $2|s| \leq k$.
\end{itemize}
\end{lemma}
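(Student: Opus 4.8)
The proof is by induction on the height of the node $v$ in the precision tree, following the recursive structure of \cref{alg:main}. First I would handle the two base cases where the recursion stops. If \cref{alg:pruning-rule} succeeds at $v$ (Line~\ref{alg:main:line:prune}), then the claimed bounds follow immediately from \cref{lem:pruning-rule-correctness}, which gives precisely additive error $t_v$ and multiplicative error $n^{\order(1)}$ (in fact with no assumption on $\ED(X_v, Y_{v,s})$ or $|s|$ being small, so both bounds hold unconditionally). If $v$ is a leaf (Line~\ref{alg:main:line:leaf}), then $X_v$ and $Y_{v,s}$ are single characters and we return $\ED(X_v, Y_{v,s})$ exactly, so both bounds hold trivially. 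These two cases also anchor the induction.

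For the inductive step, consider an internal node $v$ with children $v_1, \dots, v_B$ for which \cref{alg:pruning-rule} fails. By the induction hypothesis, the recursively computed values $\widetilde\ED(X_{v_i}, Y_{v_i, s_i})$ satisfy the two bounds at each child. The key point is to apply \cref{lem:psl} with the parameters $\alpha = n^{\order(1)}$ (the per-level multiplicative loss), $\beta = t_v$, $\delta = 1/\poly(n)$, and $a_{i,s} := \min_{s_i} \ED(X_{v_i}, Y_{v_i,s_i}) + 2|s-s_i|$; note $t_{v_i} = t_v u_{v_i}/3$ by the definition of the precision tree, which is exactly the shape of additive error $\beta \cdot u_{v_i}$ that \cref{lem:psl} expects (the factor $3$ is absorbed into the hidden constant). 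Before invoking \cref{lem:psl} I would verify its two hypotheses for the quantities $\widetilde a_{i,s}$ computed on Line~\ref{alg:main:line:range-minimum}. For the lower bound: for each $s$, picking the minimizing $s_i$ and using the child's lower bound gives $\widetilde a_{i,s} \geq n^{-\order(1)}(\min_{s_i}\ED(X_{v_i},Y_{v_i,s_i}) + 2|s-s_i|) - t_{v_i}$ after also shrinking the $2|s-s_i|$ term by the harmless factor $n^{-\order(1)}$; this matches hypothesis one with $a_{i,s}$ as defined. For the upper bound: here I need the restriction that the relevant shifts $s_i$ lie in $S_{v_i}$, that $\ED(X_{v_i}, Y_{v_i,s_i}) \leq k$, and that $2|s_i| \leq k$ — these are exactly the hypotheses under which the child's upper bound holds. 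This is where I would use the assumptions $\ED(X_v, Y_{v,s}) \le k$ and $2|s| \le k$ together with \cref{lem:divide-and-conquer}: the second bullet of that lemma guarantees an optimal-ish choice of shifts $s_i$ with $\sum_i \ED(X_{v_i}, Y_{v_i,s_i}) \leq 2\ED(X_v, Y_{v,s}) \le 2k$ and $2|s_i| \le \ED(X_v,Y_{v,s}) \le k$, hence each individual term is $\le 2k$ and each $|s - s_i|$ is small enough that $s_i$ (after rounding to the nearest element of $S_{v_i}$, which changes $\ED$ by at most $t_{v_i}$, absorbed into the additive budget) lies in $S_{v_i}$; one has to check that the range parameter $3^{\log_B n}$ defining $S$ is large enough to survive $\log_B n$ levels of such additive shift-accumulation, which it is by the choice $S = \{-k\cdot 3^{\log_B n}, \dots, k\cdot 3^{\log_B n}\}$.

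Once both hypotheses of \cref{lem:psl} are verified, its two conclusions give $\alg{Recover}(\dots) \ge \tfrac{1}{2\alpha}\sum_i a_{i,s} - t_v$ and $\alg{Recover}(\dots) \le 2\alpha \sum_i a_{i,s} + t_v$. It then remains to relate $\sum_i a_{i,s}$ to $\ED(X_v, Y_{v,s})$, which is exactly what \cref{lem:divide-and-conquer} provides: the first bullet gives $\ED(X_v, Y_{v,s}) \le \sum_i a_{i,s}$ directly (for any shifts, in particular the minimizing ones realizing $a_{i,s}$), yielding the lower bound after folding the constant $2$ into $n^{-\order(1)}$; the second bullet combined with the chosen shifts gives $\sum_i a_{i,s} \le 2\ED(X_v, Y_{v,s})$ up to the rounding error already accounted for, yielding the upper bound. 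The per-level multiplicative blow-up is a constant factor times $\alpha = n^{\order(1)}$, and since the tree has depth $\log_B n = n^{\order(1)}/\log n$ levels (using $B = \exp(\widetilde\Theta(\sqrt{\log n}))$), the total multiplicative error telescopes to $n^{\order(1)}$, exactly as in the informal analysis after \eqref{eq:divide-and-conquer}. A union bound over the $O(n)$ nodes of the tree controls the failure probability of the $\alg{Recover}$ calls (and of the query lemmas), keeping everything high-probability.

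\textbf{Main obstacle.} The delicate part is the bookkeeping in the inductive step's upper bound: making sure that the shifts $s_i$ guaranteed by \cref{lem:divide-and-conquer} genuinely land inside the discretized shift set $S_{v_i}$ after rounding, that the rounding error at every level stays within the additive budget $t_{v_i}$ supplied by the precision tree, and that the range bound $3^{\log_B n}$ is not exceeded as shift offsets accumulate down the $\log_B n$ levels — all while keeping the conditions ``$\ED \le k$'' and ``$2|s_i|\le k$'' propagating correctly so the induction hypothesis is actually applicable at the children. Aligning the parameters of \cref{lem:psl} (the $\alpha$, $\beta$, $\delta$, and the factor-$3$ discrepancy between $t_{v_i}$ and $t_v u_{v_i}$) with what \cref{alg:main} actually computes on Lines~\ref{alg:main:line:range-minimum}--\ref{alg:main:line:return} is the other place where care is needed, though it is ultimately routine.
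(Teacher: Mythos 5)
Your overall plan — induction on height, leaf/pruned base cases, PSL at internal nodes, and \cref{lem:divide-and-conquer} to relate $\sum_i a_{i,s}$ to $\ED(X_v, Y_{v,s})$ — matches the paper's structure, and your lower-bound argument goes through essentially as in the paper. But there is a genuine gap in the upper-bound half of the inductive step, and it is not a minor bookkeeping issue.

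You define $a_{i,s} = \min_{s_i} \ED(X_{v_i}, Y_{v_i,s_i}) + 2|s-s_i|$ and invoke \cref{lem:psl} with $\beta = t_v$. The PSL conclusion then reads $\alg{Recover} \leq 2\alpha \sum_i a_{i,s} + \beta$. You claim "the second bullet combined with the chosen shifts gives $\sum_i a_{i,s} \leq 2\,\ED(X_v, Y_{v,s})$", but this is false: the second bullet of \cref{lem:divide-and-conquer} gives $\sum_i \ED(X_{v_i}, Y_{v_i,s_i^*}) \leq 2\,\ED(X_v, Y_{v,s})$ together with $2|s - s_i^*| \leq \ED(X_v, Y_{v,s})$ \emph{for each} $i$, so summing the shift terms over all $B$ children yields $\sum_i a_{i,s} \leq (2 + B)\,\ED(X_v, Y_{v,s})$. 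Plugging this into your PSL conclusion gives a per-level multiplicative loss of $\Theta(\alpha B)$, not "a constant factor times $\alpha$" as you write. Unrolling over $\height(T) = \log_B n$ levels, the accumulated factor is $\Theta(B)^{\log_B n} = n^{1+\order(1)}$, which destroys the theorem: the algorithm would only distinguish $\ED \leq k$ from $\ED \geq k \cdot n^{1+\order(1)}$, which is vacuous.

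The informal remark after \eqref{eq:divide-and-conquer} that the multiplicative error becomes $2\alpha + B$ (additive in $B$, not multiplicative) is exactly what your argument fails to reproduce, and it is the reason the paper's proof is more delicate here. The paper applies \cref{lem:psl} with the asymmetric choice $a_i = \ED(X_{v_i}, Y_{v_i, s_i^*}) + 2\alpha^{-1}|s - s_i^*|$ and $\beta = \alpha t_v$: the factor $\alpha^{-1}$ on the shift term means $\alpha a_i = \alpha\,\ED(X_{v_i}, Y_{v_i, s_i^*}) + 2|s - s_i^*|$, so the shift contribution is \emph{not} rescaled by $\alpha$ in the upper hypothesis. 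After PSL, $2\alpha \sum_i a_i \leq 4\alpha\,\ED(X_v, Y_{v,s}) + 2B\,\ED(X_v, Y_{v,s})$, giving the recurrence $\alpha(h) = 4(\alpha(h-1) + B)$, which solves to $\alpha(h) = \Order(B)\cdot 2^{\Order(h)} = n^{\order(1)}$ at the root. Without this trick (or some equivalent decoupling of $B$ from the recursive $\alpha$), the inductive upper bound cannot close. You should also make the "conditions propagate" part explicit as the paper does — the induction hypothesis you need at the children is under the weaker assumption $\ED(X_{v_i}, Y_{v_i,s_i}) \leq k\cdot 3^{\depth(v_i)}$ and $2|s_i| \leq k\cdot 3^{\depth(v_i)}$, which is why $S$ is sized with a $3^{\log_B n}$ slack — but that part of your sketch is salvageable, whereas the PSL parameterization is not.
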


\begin{proof}
The recursion of \cref{alg:main} terminates in one of three cases: If $v$ is a leaf, then the output is exact and the claim is obvious. If $v$ is directly solved by \cref{alg:pruning-rule}, then by the guarantee of \cref{lem:pruning-rule-correctness} the claim is true. It remains to analyze the case when the algorithm recurs, so assume that $v$ is an internal node with children $v_1, \dots, v_B$. We prove the lower and upper bounds separately.

\proofsubparagraph{Lower Bound.}
Fix some shift $s \in S_v$, and let $s_1, \dots, s_B$ be the corresponding shifts selected in \cref{alg:main:line:range-minimum} of the algorithm. We prove that $\widetilde\ED(X_v, Y_{v, s}) \geq \ED(X_v, Y_{v, s}) / \alpha(\height(v)) - t_v$ by induction, where $\alpha(\height(v)) = 2^{\height(v)}$. To this end, we apply the Precision Sampling Lemma with $a_i = \ED(X_{v_i}, Y_{v_i, s_i}) + 2|s - s_i|$ and the following parameters:
\begin{itemize}
\item $\delta = 1 / \poly (n)$,
\item $\alpha = \alpha(\height(v) - 1)$,
\item $\beta = t_v$.
\end{itemize}
Recall that the algorithm computes $\widetilde a_{i, s} = \widetilde\ED(X_{v_i}, Y_{v_i, s_i}) + 2|s - s_i|$, where the approximation $\widetilde\ED(X_{v_i}, Y_{v_i, s_i})$ is computed recursively. Hence, it satisfies $\widetilde a_{i, s} \geq a_i / \alpha - \beta \cdot u_{v_i}$ by the induction hypothesis (recall that $t_{v_i} \leq t_v \cdot u_{v_i}$). Then $\alg{Recover}(\widetilde a_{1, s}, \dots, \widetilde a_{B, s}, u_{v_1}, \dots, u_{v_B})$ computes, with high probability, a number satisfying
\begin{align*}
    \alg{Recover}(\cdot)
    &\geq \frac{1}{2\alpha(\height(v) - 1)} \cdot \left(\sum_{i=1}^B a_i\right) - \beta \\
    &= \frac{1}{2\alpha(\height(v) - 1)} \cdot \left(\sum_{i=1}^B \ED(X_{v_i}, Y_{v_i, s_i}) + 2|s - s_i|\right) - t_v \\
    &\geq \frac{1}{2\alpha(\height(v) - 1)} \cdot \ED(X_v, Y_{v, s}) - t_v \\
    &\geq \frac{1}{\alpha(\height(v))} \cdot \ED(X_v, Y_{v, s}) - t_v,
\end{align*}
where in the third step, we applied the lower bound from \cref{lem:divide-and-conquer} and in the last step we used the definition of $\alpha(\cdot)$. Finally, recall that $\height(T) \leq \log_B(n)$ and $B = \exp(\widetilde \Theta(\sqrt{\log n}))$. Hence the total multiplicative error is bounded by $\alpha(\log_B(n)) \leq 2^{\log_B(n)} = n^{\order(1)}$.

\proofsubparagraph{Upper Bound.}
This proof is similar to the previous paragraph, but requires a more careful application of \cref{lem:divide-and-conquer}. We prove by induction the algorithm computes an approximation \raisebox{0pt}[0pt][0pt]{$\widetilde\ED(X_v, Y_{v, s}) \leq \alpha(\height(v)) \cdot (\ED(X_v, Y_{v, s}) + t_v)$} where this time the multiplicative error is bounded by $\alpha(\height(v)) \leq \Order(B) \cdot 2^{\Order(\height(v))}$, provided that $\ED(X_v, Y_{v, s}) \leq k \cdot 3^{\depth(v)}$ and~$2|s| \leq k \cdot 3^{\depth(v)}$. Note that this implies the lemma statement.

Throughout, fix some shift $s \in S_v$. The idea is to first use \cref{lem:divide-and-conquer} to find ``optimal'' shifts $s_1^*, \dots, s_B^* \in \Int$, which we use for the recursive computation. Unfortunately these shifts~$s_i^*$ may not fall into the restricted set of feasible shifts $S_{v_i}$. We therefore argue that picking shifts $s_i \in S_{v_i}$ closest possible to~$s_i^*$ is sufficient to obtain the claimed guarantee. Formally, by \cref{lem:divide-and-conquer} there exist shifts~$s_1^*, \dots, s_B^*$ satisfying the following two properties:
\begin{gather}
    \sum_{i=1}^B \ED(X_{v_i}, Y_{v_i, s_i^*}) \leq 2\ED(X_v, Y_{v, s}), \label{lem:main-upper-bound:eq:ed} \\
    2|s - s_i^*| \leq \ED(X_v, Y_{v, s}). \label{lem:main-upper-bound:eq:ed-shift}
\end{gather}
We now pick $s_1 \in S_{v_1}, \dots, s_B \in S_{v_B}$ to be the closest values to the optimal shifts $s_1^*, \dots, s_B^*$. As a first insight, observe that:
\begin{equation*}
    2|s_i^*| \leq 2|s - s_i^*| + 2|s| \leq \ED(X_v, Y_{v, s}) +2|s| \leq 2k \cdot 3^{\depth(v)}.
\end{equation*}
Recall that we set $S = \set{-k\cdot 3^{\log_B(n)}, \dots, k \cdot 3^{\log_B(n)}}$ and we therefore have~\makebox{$s_i^* \in S$}. It follows that $|s_i - s_i^*| \leq t_{v_i} / 2$ and thus
\begin{equation*}
    2|s_i| \leq 2|s_i^*| + t_{v_i} \leq 2|s_i^*| + k \leq 2k \cdot 3^{\depth(v)} + k \leq k \cdot 3^{\depth(v_i)}.
\end{equation*}
Next, we claim that $\ED(X_{v_i}, Y_{v_i, s_i}) \leq k \cdot 3^{\depth(v_i)}$, which we will use to guarantee that the recursive calls of the algorithm succeed. Indeed, we have that
\begin{equation*}
    \ED(X_{v_i}, Y_{v_i, s_i})
    \leq \ED(X_{v_i}, Y_{v_i, s_i^*}) + t_{v_i}
    \leq 2 \cdot \ED(X_v, Y_{v, s}) + k
    \leq k \cdot 3^{\depth(v_i)}.
\end{equation*}

We claim that if the algorithm was to choose the shifts $s_i$ specified in the previous paragraph in \cref{alg:main:line:range-minimum}, then the output is bounded as claimed. (This is sufficient, since \cref{alg:main:line:range-minimum} in fact minimizes over all possible shifts $s_i$.) In this case, we inductively have that
\begin{align*}
    \widetilde a_{i, s}
    &\leq \widetilde\ED(X_{v_i}, Y_{v_i, s_i}) + 2|s - s_i| \\
    &\leq \alpha(\height(v) - 1) \cdot (\ED(X_{v_i}, Y_{v_i, s_i}) + t_{v_i}) + 2|s - s_i| \\
    &\leq \alpha(\height(v) - 1) \cdot (\ED(X_{v_i}, Y_{v_i, s_i^*}) + 3 t_{v_i}) + 2|s - s_i^*|.
\end{align*}
In the last step we used that $\ED(X_{v_i}, Y_{v_i, s_i^*})$ differs from $\ED(X_{v_i}, Y_{v_i, s_i})$ by an additive error of $t_{v_i}$, and the same is true for $2|s - s_i^*|$ and $2|s - s_i|$. Next, we apply the Precision Sampling Lemma with $a_i = \ED(X_{v_i}, Y_{v_i, s_i^*}) + 2 \alpha^{-1} |s - s_i^*|$ and parameters
\begin{itemize}
\item $\delta = 1 / \poly (n)$,
\item $\alpha = \alpha(\height(v) - 1)$,
\item $\beta = \alpha t_v$.
\end{itemize}
Recall that $t_{v_i} = t_v \cdot u_{v_i} / 3$, thus by definition we have that $\widetilde a_{i, s} \leq \alpha \cdot a_i + \beta \cdot u_{v_i}$. Therefore, the Precision Sampling Lemma states that with high probability the recovery algorithm returns
\begin{align*}
    \alg{Recover}(\cdot)
    &\leq 2\alpha \cdot \left(\sum_{i=1}^B a_i\right) + \beta \\
    &\leq 2\alpha \left(\sum_{i=1}^B \ED(X_{v_i}, Y_{v_i, s_i^*})\right) + 2 \left(\sum_{i=1}^B 2|s - s_i^*|\right) + \alpha t_v \\
    &\leq 4\alpha \cdot \ED(X_v, Y_{v, s}) + 2B \cdot \ED(X_v, Y_{v, s}) + \alpha t_v \\
    &\leq 4(\alpha + B) \cdot (\ED(X_v, Y_{v, s}) + t_v) \\
    &\leq \alpha(\height(v)) \cdot (\ED(X_v, Y_{v, s}) + t_v),
\end{align*}
where for the third inequality we applied the bounds in \eqref{lem:main-upper-bound:eq:ed} and~\eqref{lem:main-upper-bound:eq:ed-shift} and in the last step we used the definition of $\alpha(\height(v)) = \Order(B) \cdot 2^{\Order(\height(v))}$ (for sufficiently large hidden constants). Since the tree has height $\log_B(n)$ and $B = \exp(\widetilde \Theta(\sqrt{\log n}))$, the overall approximation factor is bounded by $\alpha(\log_B(n)) = \Order(B) \cdot 2^{\Order(\log_B(n))} = n^{\order(1)}$, as claimed.
\end{proof}

Next, we analyze the running time of \cref{alg:main}. It turns out that the bottleneck is the computation in \cref{alg:main:line:range-minimum}, and a naive implementation would be too slow for our purposes. For that reason, we use the following lemma for an improved implementation of \cref{alg:main:line:range-minimum}. The lemma is a generalization of~\cite[Lemma~10]{BringmannCFN22}.

\begin{lemma}[Range Minimum Problem] \label{lem:range-minimum}
Let $T, T'$ be sets of integers (given in sorted order), and let~$(b_{s'})_{s' \in T'}$ be given. There is an $\Order(|T| + |T'|)$-time algorithm to compute $(a_s)_{s \in T}$ defined by
\begin{equation*}
    a_s = \min_{s' \in T'}\, b_{s'} + 2 \cdot |s - s'|.
\end{equation*}
\end{lemma}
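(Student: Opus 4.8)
The plan is to split the minimization into two one-directional sweeps: writing $2|s-s'| = \max(2(s-s'),\, 2(s'-s))$, observe that
\[
  a_s = \min\Bigl(\min_{s' \in T',\, s' \le s} b_{s'} + 2(s - s'),\ \min_{s' \in T',\, s' \ge s} b_{s'} + 2(s' - s)\Bigr).
\]
So it suffices to compute, for every $s \in T$, the two quantities $L_s := \min_{s' \le s} (b_{s'} - 2s')$ and $R_s := \min_{s' \ge s} (b_{s'} + 2s')$; then $a_s = \min(L_s + 2s,\ R_s - 2s)$. Each of $L_s$ and $R_s$ is a prefix/suffix-minimum of a fixed sequence indexed by $T'$, evaluated at the position of $s$.

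First I would merge the two sorted lists $T$ and $T'$ into one sorted list in time $\Order(|T| + |T'|)$ (both are given sorted, so this is a standard linear merge). Then I would make a single left-to-right pass over the merged list, maintaining a running minimum $\mu = \min\{\, b_{s'} - 2s' : s' \in T',\ s' \text{ seen so far} \,\}$ (initialized to $+\infty$): whenever the current element is some $s' \in T'$ I update $\mu \leftarrow \min(\mu, b_{s'} - 2s')$, and whenever the current element is some $s \in T$ I record $L_s := \mu$. Symmetrically, a single right-to-left pass over the merged list with running minimum of $b_{s'} + 2s'$ produces all the values $R_s$. Finally, a linear pass over $T$ outputs $a_s = \min(L_s + 2s,\, R_s - 2s)$ for each $s$. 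Every phase runs in time $\Order(|T| + |T'|)$, giving the claimed bound. (Ties at $s' = s$ are harmless since both branches then contribute $b_s$.)

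I do not expect a genuine obstacle here; the lemma is essentially a bookkeeping exercise. The only point requiring a little care is the interleaving of $T$ and $T'$ in the merged list: when an index belongs to both sets, one must make sure the update for $s' \in T'$ is applied before the readout for $s \in T$ in the forward pass (and likewise in the backward pass), so that the case $s' = s$ is included in the minimum; this is handled by processing the $T'$-event before the $T$-event at a shared coordinate, or equivalently by using non-strict inequalities $s' \le s$ and $s' \ge s$ as above. Correctness of the two sweeps is immediate from the fact that a prefix minimum computed incrementally equals the true prefix minimum, and the decomposition $2|s-s'| = \max(2(s-s'), 2(s'-s))$ splits the objective exactly.
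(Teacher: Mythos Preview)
Your proof is correct and follows essentially the same approach as the paper: split the absolute value into the two cases $s' \le s$ and $s' \ge s$, then handle each by a single linear sweep over the merged sorted lists. The only cosmetic difference is that you factor out $2s$ to reduce each branch to a pure prefix/suffix minimum, whereas the paper phrases the left branch as a recurrence $a^{\le}_{s_i} = \min(a^{\le}_{s_{i-1}} + 2s_i - 2s_{i-1},\ \min_{s_{i-1} < s' \le s_i} b_{s'} + 2s_i - 2s')$; these are equivalent.
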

\begin{proof}
The idea is to compute auxiliary values
\begin{gather*}
    a^{\leq}_s = \min_{\substack{s' \in T'\\s' \leq s}} b_{s'} + 2s - 2s', \\
    a^{\geq}_s = \min_{\substack{s' \in T'\\s' \geq s}} b_{s'} - 2s + 2s',
\end{gather*}
as then returning $a_s = \min(a^{\leq}_s, a^{\geq}_s)$ is correct. We show how to compute $a^{\leq}_s$ (for all $s$); the values $a^{\geq}_s$ are symmetric. Let $T = \set{s_1 < \dots < s_{|T|}}$. We evaluate the base case $a^{\leq}_{s_1}$ naively. We then compute $a^{\leq}_{s_i}$ for all $i = 2, \dots, |T|$ as follows:
\begin{equation*}
    a^{\leq}_{s_i} = \min\left\{a^{\leq}_{s_{i-1}} + 2s_i - 2s_{i-1}, \min_{s_{i-1} < s' \leq s_i} b_{s'} + 2s_i - 2s' \right\}\!.
\end{equation*}
For the correctness, we distinguish two cases. Let $s' \leq s_i$ be the index which attains the minimum in the definition of $a^{\leq}_{s_i}$. On the one hand, if $s' \leq s_{i-1}$, then $a^{\leq}_{s_i} = a^{\leq}_{s_{i - 1}} + 2s_i - 2s_{i-1}$ and thus the first term in the minimum is correct. On the other hand, if $s' > s_{i-1}$, then the second term in the minimum is correct by definition. In order to compute $(a^{\leq}_s)_s$ we sweep from left to right over all values in $T$ and $T'$ exactly once, hence the running time can be bounded by~$\Order(|T| + |T'|)$.
\end{proof}

\begin{lemma}[Running Time of \cref{alg:main}] \label{lem:main-time}
Assume that $\ED(X, Y) \leq k$. If only the string~$Y$ is preprocessed, then \cref{alg:main} runs in expected time $n^{1+\order(1)} / k + k n^{\order(1)}$. If both strings~$X, Y$ are preprocessed, then \cref{alg:main} runs in expected time $k n^{\order(1)}$.
\end{lemma}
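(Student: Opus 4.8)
The plan is to charge the total running time to the nodes of the precision tree that Algorithm~2 actually visits, splitting the cost at each visited node into (i) the cost of invoking Algorithm~1 (the pruning rule) and (ii) the cost of the local recombination work in \cref{alg:main:line:range-minimum,alg:main:line:return}. First I would observe that a node $v$ is \emph{visited recursively} (as opposed to being cut off by a successful prune at its parent) only if Algorithm~1 \emph{failed} at $\parent(v)$. By \cref{lem:pruning-rule-efficiency} there are at most $\Order(k\log n)$ nodes at which Algorithm~1 fails; since the tree is $B$-ary, this means the set $V$ of visited nodes has size $|V| \le \Order(k\log n)\cdot B = k\cdot n^{\order(1)}$.

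Next I would bound the per-node cost. At every visited node $v$ we run Algorithm~1 once. By \cref{lem:pruning-rule-time}, this costs expected time $\Order^*(|X_v|/t_v + k/t_v)$ in the one-sided setting, and $\Order^*(k/t_v)$ in the two-sided setting. If $v$ is not a leaf and is not pruned, we additionally perform, for each of the $B$ children, one range-minimum computation via \cref{lem:range-minimum} over $T = S_v$ and $T' = S_{v_i}$, costing $\Order(|S_v| + |S_{v_i}|)$, followed by one call to $\alg{Recover}$ over $B$ values and $|S_v|$ shifts, costing $\Order(|S_v|\cdot B\log(\delta^{-1}))$ by \cref{lem:psl}. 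Recalling $|S_v| = |S \cap \floor{t_v/2}\Int| = \Order^*(k/t_v)$ and $|S_{v_i}| = \Order^*(k/t_{v_i})$, the recombination cost at $v$ is $\Order^*(k/t_v + \sum_i k/t_{v_i})$. So the total cost at $v$ is dominated by $\Order^*(|X_v|/t_v)$ (one-sided only) plus $\Order^*(k/t_v + \sum_i k/t_{v_i})$.

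Now I would sum over all visited nodes. For the $\Order^*(k/t_v)$ terms, I use \cref{lem:expected-precision}: $\Ex(1/t_v \mid E) \le n^{\order(1)}/t = n^{\order(1)}/k$, so $\Ex(k/t_v) \le n^{\order(1)}$ per node, and summing over the $k\cdot n^{\order(1)}$ visited nodes (and over the $B$ children of each, which only multiplies by $n^{\order(1)}$) yields $kn^{\order(1)}$. This already gives the two-sided bound $kn^{\order(1)}$. For the one-sided setting it remains to bound $\sum_{v \in V}\Ex(|X_v|/t_v)$. Here I would \emph{not} restrict to visited nodes but bound the sum over \emph{all} nodes of the precision tree from above: on each fixed level the substrings $X_v$ partition $X$, so $\sum_{v \text{ on level } \ell}|X_v| = n$, and by \cref{lem:expected-precision} applied levelwise, $\Ex(\sum_v |X_v|/t_v \mid E) \le \sum_\ell n \cdot (\log n)^{\Order(\ell)}/k = n^{1+\order(1)}/k$ since there are only $\log_B n \le \log n$ levels. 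Combining, the one-sided running time is $n^{1+\order(1)}/k + kn^{\order(1)}$.

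The main obstacle I anticipate is the interaction between the random tolerances $t_v$ and the $\argmin$ over shifts $s_i \in S_{v_i}$ inside \cref{alg:main:line:range-minimum}: the set of visited nodes $V$ is itself a random variable that may correlate with the $t_v$'s, so one cannot naively multiply $|V|$ by a per-node expectation. The clean way around this — which I would adopt — is precisely the step above: for the $k/t_v$ terms, bound $|V|$ by its \emph{deterministic} worst-case size $k\cdot n^{\order(1)}$ (valid whenever $\ED(X,Y)\le k$, by \cref{lem:pruning-rule-efficiency}), and only then take expectations of $1/t_v$ node-by-node; and for the $|X_v|/t_v$ terms, drop the dependence on $V$ entirely by summing over \emph{all} nodes and exploiting that the $X_v$ partition $X$ on each level. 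Both relaxations are lossless up to $n^{\order(1)}$ factors, so no genuine difficulty remains beyond bookkeeping the $n^{\order(1)}$ terms (all of which are $2^{\widetilde\Order(\sqrt{\log n})}$ and absorb into $\Order^*(\cdot)$).
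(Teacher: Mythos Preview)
Your proposal is correct and follows essentially the same approach as the paper: bound the per-node cost via \cref{lem:pruning-rule-time}, \cref{lem:range-minimum}, and \cref{lem:psl}; bound the number of visited nodes via \cref{lem:pruning-rule-efficiency}; take expectations via \cref{lem:expected-precision}; and use that the $X_v$ on each level partition $X$. You are in fact more careful than the paper about the potential correlation between the (random) set $V$ of visited nodes and the tolerances $t_v$, and your workaround---replacing $V$ by its deterministic superset for the $k/t_v$ terms and by the full tree for the $|X_v|/t_v$ terms---is clean and loses nothing beyond $n^{\order(1)}$ factors.
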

\begin{proof}
We first bound the running time of a single execution of \cref{alg:main} (ignoring the cost of recursive calls). The computation in \cref{alg:main:line:leaf-condition,alg:main:line:leaf} merely compares single characters and therefore takes time $|S_v| = \Order^*(k / t_v)$. By~\cref{lem:pruning-rule-time}, the call to \cref{alg:pruning-rule} takes expected time \makebox{$\Order^*(|X_v| / t_v + k / t_v)$} (for one-sided preprocessing) or $\Order^*(k / t_v)$ (for two-sided preprocessing). Each iteration of the loop in \crefrange{alg:main:line:iter-children}{alg:main:line:range-minimum} is dominated by the computation in \cref{alg:main:line:range-minimum} (ignoring the recursive calls in \cref{alg:main:line:recursion}). Using \cref{lem:range-minimum} with~$T = S_v$ and~\makebox{$T' = S_{v_i}$}, this step takes time $\Order(|S_v| + |S_{v_i}|)$, and thus the loop runs in time~\makebox{$\Order(B \cdot |S_v| + \sum_i |S_{v_i}|) = \Order^*(k / t_v + \sum_i k / t_{v_i})$}. In all of these bounds we can bound~$1/t_v$ by~$n^{\order(1)} / k$ in expectation, according to \cref{lem:expected-precision}, so the total expected time per node becomes~\makebox{$n^{\order(1)} \cdot (|X_v| / k + 1)$} (for one-sided preprocessing) or~$n^{\order(1)}$ (for two-sided preprocessing).

To account for the recursive calls, we first use \cref{lem:pruning-rule-efficiency} to bound the number of recursive calls by $\Order(k \log n)$. Let $v_1, \dots, v_m$ (with $m = \Order(k \log n)$) denote all nodes for which \cref{alg:main} is recursively called. Then the expected running time for one-sided preprocessing can be bounded by
\begin{align*}
    n^{\order(1)} \cdot \sum_{i=1}^m \left(\frac{|X_{v_i}|}k + 1\right)
    &\leq k n^{\order(1)} + n^{\order(1)} \cdot \sum_{d=1}^{\log_B(n)} \sum_{\substack{i=1\\\depth(v_i) = d}}^m \frac{|X_{v_i}|}k \\
    &\leq k n^{\order(1)} + n^{\order(1)} \cdot \sum_{d=1}^{\log_B(n)} \frac nk \\
    &\leq k n^{\order(1)} + \frac{n^{1+\order(1)}}k.
\end{align*}
Here we used that across any level in the precision tree, the strings $X_v$ form a partition of $X$ into consecutive substrings. In the same way we can bound the running time for two-sided preprocessing by $k n^{\order(1)}$.
\end{proof}

\subsection{Proof of the Main Theorems} \label{sec:overview:sec:assemble}
We are finally ready to prove our main theorems.
\begin{proof}[Proof of \cref{thm:one-sided,thm:two-sided}]
We assume that either only $Y$ (for \cref{thm:one-sided}) or both $X$ and~$Y$ (for \cref{thm:two-sided}) are preprocessed by \cref{lem:matching-queries,lem:shifted-distance-queries}. We will solve the $(k, K)$-gap edit distance problem, for some parameter $K$ to be picked later, by running \cref{alg:main} (with input $v = \Root(T)$ and $s = 0$, so in particular $2|s| \leq k$). For the correctness proof, we apply \cref{lem:main-correctness} to the following two cases:
\begin{itemize}
\item If $\ED(X, Y) \leq k$, then $\widetilde\ED(X, Y) \leq n^{\order(1)} \cdot (\ED(X, Y) + t) \leq n^{\order(1)} \cdot k$.
\item If $\ED(X, Y) \geq K$, then \raisebox{0pt}[0pt][0pt]{$\widetilde\ED(X, Y) \geq n^{-\order(1)} \cdot \ED(X, Y) - t \geq n^{-\order(1)} \cdot K$}.
\end{itemize}
By setting $K = k \cdot n^{\order(1)}$ for a sufficiently large subpolynomial factor, we can distinguish the two cases based on the outcome \raisebox{0pt}[0pt][0pt]{$\widetilde\ED(X, Y)$}.

Next, we analyze the running time. If $\ED(X, Y) \leq k$, then the algorithm runs in expected time~$n^{1+\order(1)}/k + k n^{\order(1)}$ or $k n^{\order(1)}$, respectively; see \cref{lem:main-time}. By Markov's inequality, the algorithm respects these time bounds with constant probability. We may therefore interrupt the algorithm after it exceeds its time budget and report~``$\ED(X, Y) \geq K$'' in case of an interruption. We can boost the success probability to~$1 - 1/\poly (n)$ by running the algorithm~$\Order(\log n)$ times in parallel and reporting the majority answer. (This also means that the preprocessing is repeated $\Order(\log n)$ times with independently sampled precision trees.)
\end{proof}

\bibliographystyle{plain}
\bibliography{refs}

\appendix
\section{\texorpdfstring{Proofs of \cref{lem:divide-and-conquer,lem:expected-precision}}{Missing Proofs}} \label{sec:divide-and-conquer}
In this section we provide proofs of \cref{lem:divide-and-conquer,lem:expected-precision}.

\lemdivideandconquer*
\begin{proof}[Proof. Lower Bound]\enspace%
Let us write $Y_i = Y_{i, 0} = Y \range{j_{i-1}}{j_i}$ (in analogy to the notation~$X_i$). It is clear that $\ED(Y_{i, s_i}, Y_i) \leq 2|s_i|$ by deleting and inserting at most $|s_i|$ symbols. Therefore, by several applications of the triangle inequality we have
\begin{align*}
    \sum_{i=1}^B \ED(X_i, Y_{i, s_i}) + 2|s_i|
    &\geq \sum_{i=1}^B \ED(X_i, Y_{i, s_i}) + \ED(Y_{i, s_i}, Y_i) \\
    &\geq \sum_{i=1}^B \ED(X_i, Y_i) \\
    &\geq \ED(X, Y).
\end{align*}

\proofsubparagraph{Upper Bound.}
For the upper bound, let $A : \set{0, \dots, n} \to \set{0, \dots, n}$ denote an optimal alignment between $X$ and $Y$ (as defined in \cref{sec:preliminaries}). Then
\begin{equation}
    \ED(X, Y) = \sum_{i=1}^B \ED(X \range{j_{i-1}}{j_i}, Y \range{A(j_{i-1})}{A(j_i)}). \label{lem:divide-and-conquer:eq:alignment}
\end{equation}
We pick $s_i = A(j_{i-1}) - j_{i-1}$. The first step is to prove that \makebox{$2|s_i| \leq \ED(X, Y)$}, thereby proving the second item of the upper bound. To see this, we first express $\ED(X, Y)$ as the sum of two edit distances $\ED(X \range{0}{j_{i-1}}, Y\range{0}{A(j_{i-1})})$ and $\ED(X \range{j_{i-1}}{n}, Y\range{A(j_{i-1})}{n})$, using that $A$ is an optimal alignment. Now, since the edit distance of two strings $A, B$ is always at least $|\, |A| - |B| \,|$, we conclude that $\ED(X, Y) \geq 2|s_i|$.

The next and final part is to prove that $\sum_i \ED(X_i, Y_{i, s_i}) \leq 2\ED(X, Y)$.
\begin{align*}
    \ED(X_i, Y_{i, s_i})
    &= \ED(X \range{j_{i-1}}{j_i}, Y \range{A(j_{i-1})}{A(j_{i-1}) + j_i - j_{i-1}}) \\
    &\leq \ED(X \range{j_{i-1}}{j_i}, Y \range{A(j_{i-1})}{A(j_i)}) + |(A(j_i) - A(j_{i-1})) - (j_i - j_{i-1})| \\
    &\leq 2 \cdot \ED(X \range{i_{j-1}}{i_j}, Y \range{A(j_{i-1})}{A(j_i)}),
\end{align*}
where in the last step we again used that the edit distance between two strings is at least their length difference. It follows that
\begin{equation*}
    \sum_{i=1}^B \ED(X_i, Y_{i, s_i})
    \leq 2 \sum_{i=1}^B \ED(X \range{j_{i-1}}{j_i}, Y \range{A(j_{i-1})}{A(j_i)})
    \leq 2\ED(X, Y). \qedhere
\end{equation*}
\end{proof}

\lemexpectedprecision*
\begin{proof}
Recall that we assign $t_v = t_{\parent(v)} \cdot u_v / 3$ for all non-root nodes in the precision tree, where the samples $u_v$ are independent of each other and sampled from $\Exp(\lambda)$, the exponential distribution with parameter $\lambda = \Order(\log n)$. Recall that the exponential distribution has probability density function $f(x) = \lambda e^{-\lambda x}$, and thus
\begin{equation*}
    \Pr_{u \sim \Exp(\lambda)}(u \leq x) = \int_{u=0}^x \lambda e^{-\lambda x} du = 1 - e^{-\lambda x} \leq \lambda x.
\end{equation*}
We let $E$ denote the event that $u_v \geq 1/\poly(n)$ for all nodes $v$. For any specific node $v$ we have $u_v \geq 1/\poly(n)$ with high probability, and thus by a union bound $E$ happens with high probability as well.

Next, we prove that conditioned on $E$, the expectation of $1/u$ for $u \sim \Exp(\lambda)$ is small:
\begin{align*}
    \Ex_{u \sim \Exp(\lambda)} \left(1/u \mid E\right)
    &\leq \frac{1}{\Pr(E)} \cdot \int_{u=1/\poly(n)}^\infty \frac1u \cdot \lambda e^{-\lambda u} du \\
    &\leq \frac{1}{\Pr(E)} \cdot \left(\int_{u=1/\poly(n)}^1 \frac1u \cdot \lambda e^{-\lambda u} du + \int_{u=1}^\infty \frac1u \cdot \lambda e^{-\lambda u} du\right) \\
    &\leq \frac{1}{\Pr(E)} \cdot \Order(\lambda \log n + 1) \\
    &\leq \Order(\log^2 n).
\end{align*}

We now prove the claimed bound. Fix some node $v$ and let $\Root(T) = v_1, \dots, v_{\depth(v)} = v$ denote the root-to-$v$ path in the precision tree. Recall that $t_v = t \cdot (u_{v_1} / 3) \cdots (u_{v_{\depth(v)}} / 3)$. We have
\begin{align*}
    \Ex\left(\frac1{t_v}\;\middle|\;E\right)
    &= \frac1t \cdot \left(3 \cdot \Ex_{u \sim \Exp(\lambda)}\left(\frac{1}{u}\;\middle|\;E\right)\right)^{\depth(v)} \\
    &\leq \frac{(\log n)^{\Order(\depth(v))}}{t} \\
    &\leq \frac{n^{\order(1)}}t.
\end{align*}
For the last inequality, we used that $\depth(v) \leq \log_B(n) = \widetilde\Order(\sqrt{\log n})$, and therefore the overhead becomes $\exp(\widetilde\Order(\sqrt{\log n})) = n^{\order(1)}$.
\end{proof}

\end{document}